\documentclass[sigconf, nonacm, pdfa]{acmart}

\usepackage{graphicx}
\usepackage{subcaption}
\usepackage{algorithm}
\usepackage{algorithmic}
\usepackage{tikz}
\usepackage{enumitem}
\usepackage{comment}
\usepackage{xcolor}
\usepackage[normalem]{ulem}
\usepackage{balance}
\usepackage[a-2b]{pdfx}

\usepackage{longtable}
\usepackage{booktabs}   
\usepackage{subcaption} 

\usepackage{svg}
\usepackage{tabularx}
\usepackage{comment}

\usepackage{array}
\usepackage{mathtools}
\usepackage{multirow}
\usepackage{multicol}
\usepackage{hhline}
\usepackage[export]{adjustbox}
\usepackage{makecell}

\usepackage{threeparttable}
\usepackage{url}

\usepackage{todonotes}
\usepackage{enumitem}

\usetikzlibrary{calc}

\usepackage{hhline}
\usepackage{wrapfig}
\usepackage{nicematrix} 
\usepackage[table,xcdraw]{xcolor}

\usepackage{tikz}
\usetikzlibrary{calc}

\newcolumntype{|}{!{\vrule width 0.5pt}}
\newcolumntype{?}{!{\vrule width 1pt}}
\newcolumntype{^}{!{\vrule width 1.2pt}}

\definecolor{myblue}{RGB}{91,155,213}
\definecolor{mydark}{RGB}{0,0,0}

\newcommand{\floor}[1]{\left\lfloor #1 \right\rfloor}

\newcommand{\appropto}{\mathrel{\vcenter{
  \offinterlineskip\halign{\hfil$##$\cr
    \propto\cr\noalign{\kern2pt}\sim\cr\noalign{\kern-2pt}}}}}

\newcolumntype{|}{!{\vrule width 0.5pt}}
\newcolumntype{?}{!{\vrule width 1pt}}
\newcolumntype{^}{!{\vrule width 1.2pt}}

\newcolumntype{|}{!{\vrule width 0.5pt}}
\newcolumntype{?}{!{\vrule width 1pt}}
\newcolumntype{^}{!{\vrule width 1.2pt}}

\definecolor{myblue}{RGB}{91,155,213}
\definecolor{mydark}{RGB}{0,0,0}

\newcommand{\method}{\texttt{RT-RkNN}}
\newcommand{\RkNN}{R$k$NN}
\newcommand{\kNN}{$k$NN}

\newcommand\vldbdoi{10.14778/3819518.3819527}
\newcommand\vldbpages{1963 - 1976}
\newcommand\vldbvolume{19}
\newcommand\vldbissue{9}
\newcommand\vldbyear{2026}
\newcommand\vldbauthors{\authors}
\newcommand\vldbtitle{\shorttitle} 
\newcommand\vldbavailabilityurl{https://github.com/simon2/RT-RkNN}
\newcommand\vldbpagestyle{empty} 

\begin{document}
\title{\method{}: Reverse k Nearest Neighbor Queries as a Graphics Ray Casting Problem}

\author{Zhengyang Bai}
\affiliation{%
  \institution{RIKEN Center for Computational Science}
  \city{Kobe}
  \state{Japan}
}
\email{zhengyang.bai@riken.jp}


\author{Peng Chen}
\authornote{Corresponding author.}
\affiliation{%
  \institution{RIKEN Center for Computational Science}
  \city{Kobe}
  \state{Japan}
}
\email{peng.chen@a.riken.jp}

\author{Mohamed Wahib}
\authornotemark[1]
\affiliation{%
  \institution{RIKEN Center for Computational Science}
  \city{Kobe}
  \state{Japan}
}
\email{mohamed.attia@riken.jp}
\begin{abstract}
Reverse $k$ nearest neighbor (\RkNN{}) queries are fundamental in spatial databases, location-based analytics, and recommendation systems. Existing state-of-the-art techniques rely on spatial pruning supported by R-trees and their variants. However, their pruning effectiveness degrades significantly in challenging scenarios where the number of facilities is small, the user population is dense, or the value of $k$ is large. To overcome these limitations, we formulate the \RkNN{} query in two-dimensional geometric spaces as a graphics ray casting problem, in which users are modeled as rays and facilities are represented as geometric primitives. Based on this formulation, we design the first algorithm and provide an implementation that exploits dedicated hardware ray tracing cores on modern GPUs. This novel approach preserves strong filtering performance even for large values of $k$, dense user populations, and highly sparse facility distributions. Extensive experimental results demonstrate that our method outperforms state-of-the-art algorithms in diverse settings, especially in scenarios where traditional pruning strategies become inefficient.

\end{abstract}

\maketitle

\pagestyle{\vldbpagestyle}
\begingroup\small\noindent\raggedright\textbf{PVLDB Reference Format:}\\
\vldbauthors. \vldbtitle. PVLDB, \vldbvolume(\vldbissue): \vldbpages, \vldbyear.\\
\href{https://doi.org/\vldbdoi}{doi:\vldbdoi}
\endgroup
\begingroup
\renewcommand\thefootnote{}\footnote{\noindent
This work is licensed under the Creative Commons BY-NC-ND 4.0 International License. Visit \url{https://creativecommons.org/licenses/by-nc-nd/4.0/} to view a copy of this license. For any use beyond those covered by this license, obtain permission by emailing \href{mailto:info@vldb.org}{info@vldb.org}. Copyright is held by the owner/author(s). Publication rights licensed to the VLDB Endowment. \\
\raggedright Proceedings of the VLDB Endowment, Vol. \vldbvolume, No. \vldbissue\ %
ISSN 2150-8097. \\
\href{https://doi.org/\vldbdoi}{doi:\vldbdoi} \\
}\addtocounter{footnote}{-1}\endgroup

\ifdefempty{\vldbavailabilityurl}{}{
\vspace{.3cm}
\begingroup\small\noindent\raggedright\textbf{PVLDB Artifact Availability:}\\
The source code, data, and/or other artifacts have been made available at \url{\vldbavailabilityurl}.
\endgroup
}

\section{Introduction}\label{sec:intro}
Reverse $k$ Nearest Neighbor (\RkNN{}) queries are fundamental operators in spatial databases and location-based services, with applications spanning influence analysis, recommendation systems, and spatial impact assessment~\cite{RkNN4trj,RkNN4loc,rknn4Maxinf,rknn,rknn4fac,maxRkNN,maxRkNN2,contimaxrknn}. Given a set of facilities $F$ and users $U$, an \RkNN{} query retrieves all users for whom a given facility is one of their $k$ nearest neighbors. 
Unlike the standard $k$ nearest neighbor (\kNN{}) query that captures local proximity, \RkNN{} characterizes a facility's \emph{influence set} or effective market reach, supporting operational efficiency and protecting substantial capital investments in domains such as retail site selection and delivery services~\cite{milliondollar}. 
Beyond spatial applications, \RkNN{} has gained renewed attention in artificial intelligence, where modeling influence often outweighs similarity alone. It refines density estimation and cluster expansion in DBSCAN variants~\cite{RNN-DBSCAN,KR-DBSCAN}, and mitigates hubness in graph neural networks, promoting more balanced graph structures and improved representation learning~\cite{hubness}.
Due to their utility, significant research has focused on developing efficient \RkNN{} processing techniques~\cite{rknn,six,yl01,finch,InfZone,tpl,slice,vor-tree,csd-rknn,tplpp,vol2,safar2009voronoi}, as well as variants like dynamic, continuous, probabilistic, and batched \RkNN{}~\cite{dynamicRkNN,dynamic2,continousRkNN,BRkNN,InfZone2,lazy-update,continuous2,prob-rknn,prob-rknn2,moving-obj,4511444,pathRknn,rknnmob}.

A critical distinction from the standard \kNN{} query is the nature of their search spaces. While a \kNN{} query operates within a compact, query-centered region, an \RkNN{} query must examine user-centric influence regions distributed across the entire data space, resulting in a significantly larger and more complex search area. To manage this complexity, most state-of-the-art solutions rely on spatial pruning strategies to eliminate candidates early. These include region-based pruning~\cite{six,slice} and half-space pruning~\cite{tpl,InfZone,tplpp,finch}, typically built upon spatial indexes like the R-tree~\cite{R-tree} or its variants (e.g., R*-tree~\cite{R*-tree}). These methods leverage geometric relationships to achieve strong practical performance on large datasets.
Despite their success, the efficiency of these pruning-based strategies is predicated on the selectivity of geometric relationships. This reliance becomes a critical limitation in realistic and increasingly common scenarios where this selectivity diminishes. 
For instance, California's 337 hospitals serve millions of patients annually~\cite{hospital}, while around 50 last-mile facilities deliver 2.3 million daily packages to 8.8 million residents in New York City~\cite{nyc_comptroller_2025}. In such cases, the expansive influence region of each facility severely undermines spatial pruning, motivating the pursuit of a robust \RkNN{} method that does not degrade under these challenging conditions.

We identify three specific scenarios where traditional pruning-based \RkNN{} algorithms face significant performance degradation: 
1) \emph{Small Facility Sets:} A small number of facilities leads to expanded potential influence regions for each, drastically reducing the number of candidates that can be pruned. 
2) \emph{Large User Populations:} As the cardinality of $U$ grows, spatial pruning retains a much larger pool of candidate users, leading to a prohibitive increase in verification costs.
3) \emph{Large $k$ Values:} Increasing $k$ weakens the geometric criteria for discarding candidates, as the conditions for exclusion become harder to satisfy, thereby diminishing the pruning effect.

To address these challenges, we propose a fundamental formulation of the two-dimensional (2D) \RkNN{} problem, called~\method{}. We model it as a three-dimensional (3D) graphics ray casting problem, where users are treated as rays and facilities are encoded as geometric primitives. This novel formulation allows the \RkNN{} evaluation to be mapped to massively parallel ray-primitive intersection tests. Crucially, the core geometric operations in this model align perfectly with the capabilities of GPU Ray Tracing (RT) cores, which were originally designed for 3D rendering. By leveraging these specialized hardware accelerators, our approach minimizes warp divergence, exploits massive parallelism, and delivers scalable \RkNN{} performance that remains stable even in settings where traditional spatial pruning becomes ineffective.

This paper makes the following contributions:
\begin{itemize}[leftmargin=2.5mm]
    \item We formulate 2D \RkNN{} queries as a 3D graphics ray casting problem and establish the formal equivalence between ray--primitive occlusion and \RkNN{}'s spatial pruning relationships.
    \item We provide the first algorithm derived from this new formulation, which can leverage mature technologies in computer graphics and enables massively parallel execution of ray--primitive intersection tests.
    \item We present the application of GPU ray tracing cores to \RkNN{} query processing, demonstrating how hardware-accelerated ray--primitive intersection tests can be effectively leveraged for efficient spatial pruning and user verification. 
    \item Extensive experiments on real-world datasets with up to more than 23 million users demonstrate that our approach outperforms state-of-the-art baselines across diverse parameter settings. Notably, our method performs particularly well in scenarios where traditional spatial pruning becomes inefficient, such as with large $k$ values, high user cardinality, or sparse facility distributions.
\end{itemize}


\section{Background \& Related work}\label{sec:pre}
In this section, we first provide the formal definition of the target problem and address the scope of this paper in \autoref{sec:def}.  We review the \RkNN{} related work in \autoref{sec:related_work}, and the use of RT cores in spatial queries in \autoref{related-rt}.

\subsection{Problem Statement}\label{sec:def}
\RkNN{} queries are classified into \textit{monochromatic \RkNN{} queries} and \textit{bichromatic \RkNN{} queries}~\cite{rknn}.

\paragraph{\textbf{Monochromatic \RkNN{}}}
Let $P=\{p_1,\ldots,p_n\}$ be a set of points in a metric space. For any point $p\in P$, let $\mathrm{NN}_k(p;P)$ denote the set of the $k$ nearest neighbors of $p$ within $P\setminus\{p\}$.  
A monochromatic \RkNN{} query for a point $q\in P$ returns
\[
\mathrm{RkNN}(q;P)
    = \{\, p\in P\setminus\{q\} \mid q \in \mathrm{NN}_k(p;P) \,\}.
\]
In other words, $p$ is an \RkNN{} of $q$ if $q$ is among the $k$ closest points to $p$ in the same dataset.

\paragraph{\textbf{Bichromatic \RkNN{}}}
Let $F=\{f_1,\ldots,f_m\}$ be a set of facilities and $U=\{u_1,\ldots,u_n\}$ be a set of users, where $F$ and $U$ lie in the same metric space and $F\cap U=\emptyset$.
For any user $u\in U$, let $\mathrm{NN}_k(u;F)$ denote its $k$ nearest facilities in $F$.  
A bichromatic \RkNN{} query for a facility $f\in F$ returns
\[
\mathrm{RkNN}(f;F,U)
    = \{\, u\in U \mid f \in \mathrm{NN}_k(u;F) \,\}.
\]
That is, $u$ is an \RkNN{} of $f$ if $f$ is one of the $k$ nearest facilities to $u$.

\RkNN{} queries are challenging because the membership condition for a user $u$ depends on how the distance $\mathrm{dist}(u,f)$ compares to its distances to \emph{all} other facilities in $F$. Formally, $u$ is an \RkNN{} of $f$ if and only if
\[
\bigl|\{\, a \in F \setminus \{f\} \mid \mathrm{dist}(u,a) < \mathrm{dist}(u,f) \,\}\bigr| < k .
\]
This condition requires determining the rank of $\mathrm{dist}(u,f)$ within the entire distance set $\{\mathrm{dist}(u,a)\}_{a \in F}$. Unlike forward \kNN{}, which only identifies the $k$ smallest distances for each query point, the reverse condition depends on the \emph{global ordering} of distances around every user. Consequently, \RkNN{} queries inherently involve many distance comparisons per user, making them mathematically global and significantly more complex.

\textbf{Scope of this paper}. 
Following existing \RkNN{} techniques that primarily target 2D spatial data that arise in location-based services~\cite{six,finch,InfZone,slice}, we also put the focus of this paper on 2D location data.
Although our approach applies to both monochromatic and bichromatic \RkNN{} queries, we focus on the bichromatic setting in our performance evaluation for easier presentation as it represents the more general case with the monochromatic case reducible to it. The performance of our approach on monochromatic \RkNN{} queries is only briefly discussed.
Although, as mentioned above, \RkNN{} queries have been extensively studied with various variants, we focus on the most widely used static queries in Euclidean space in this paper.
Finally, given the substantially larger memory in modern machines, even our largest dataset (containing over 23 million points) occupies only about 218 MB, far below the capacity of a typical PC.
We therefore assume that all data can be fully loaded into main memory. 
As a result, we do not evaluate disk I/O performance in this paper and instead report runtime directly.

\begin{figure*}[t]
    \centering
    \subfloat[SIX filtering example.]{
        \includegraphics[width=0.215\linewidth]{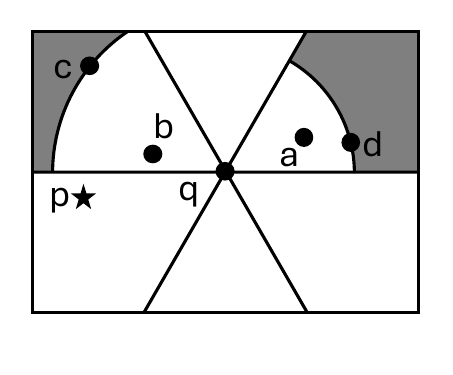}
        \label{fig:six}
    }
    \hfill 
    \subfloat[TPL filtering example.]{
        \includegraphics[width=0.22\linewidth]{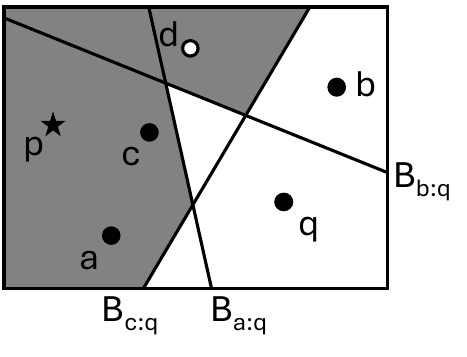}
        \label{fig:tpl}
    }
    \hfill 
    \subfloat[InfZone pruning example.]{%
        \includegraphics[width=0.22\linewidth]{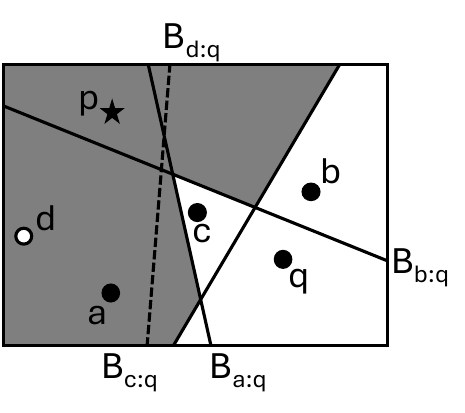}%
        \label{fig:inf}
    }
    \hfill 
    \subfloat[SLICE pruning example.]{%
        \includegraphics[width=0.23\linewidth]{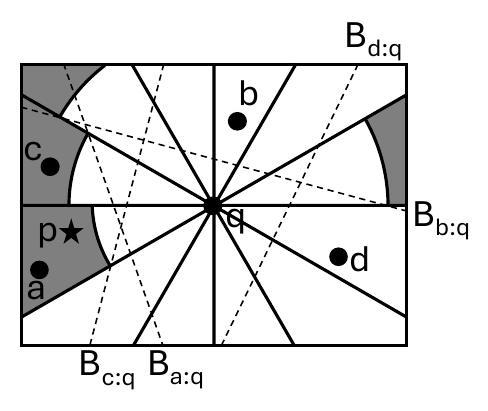}%
        \label{fig:slice}
    }
    \caption{Illustration of SIX~\cite{six} filtering, TPL~\cite{tpl} filtering, InfZone~\cite{InfZone} pruning, and SLICE~\cite{slice} pruning  ($k=2$).}
    \label{fig:system_matrix_introduction}
\end{figure*}

\subsection{Existing \RkNN{} Algorithms}\label{sec:related_work}
While early studies on reverse nearest neighbor (RNN) queries relied on substantial preprocessing to enable efficient query evaluation~\cite{rknn,yl01,vor-tree,vol2,safar2009voronoi,csd-rknn}, such preprocessing introduces significant overhead and lacks the flexibility required for dynamic updates in real-time services. We therefore focus on four representative preprocessing-free techniques, selected for their wide adoption and strong performance.

\textbf{SIX}~\cite{six}, known as the first algorithm that does not need any pre-computation, uses a \textit{regions-based} pruning strategy which partitions the space around the query point into six equal angular partitions of $60^\circ$ each.
For a given partition $P$, the distance from the query to its $k$-th nearest facility in that region defines a filtering threshold: any user located in $P$ at a distance greater than the threshold cannot be an \RkNN{} of the query facility. 
\autoref{fig:six} illustrates the idea that users lying in the shaded area are pruned because at least $k$ facilities within the same partition are closer to them than the query.
The remaining users are treated as candidates to be verified in the subsequent verification stage, which is performed by issuing a range query centered at the candidate with radius $dist(u,q)$, where $u$ is the candidate user and $q$ is the query facility.
The candidate is confirmed as an \RkNN{} result if and only if the range query contains no more than $k$ facilities.
However, since a separate range query is required for each candidate user, the verification phase leads to substantial overhead.

\textbf{TPL}~\cite{tpl} is the first algorithm which applied spatial pruning based on \textit{half-space} pruning to \RkNN{} queries and inspired many follow up works~\cite{finch,InfZone,tplpp}, including ours. Given a query point $q$ and a facility $f$, the perpendicular bisector $B_{f:q}$ divides the space into two half-spaces: the region $H_{f:q}$ that contains $f$ and the opposite region $H_{q:f}$. 
Any point $p$ lying in region $H_{f:q}$ indicates that $p$ is closer to facility $f$ rather than $q$. A point lying in at least $k$ such half-spaces cannot regard $q$ as one of its $k$ nearest facilities and can therefore be pruned. 
The TPL algorithm iteratively accesses unpruned facilities in increasing order of their distance to $q$, and each accessed facility contributes a half-space that potentially enlarges the prunable region.
Consider the example in ~\autoref{fig:tpl}, where the bisectors of facilities $a$, $b$, and $c$ partition the space. When $k=2$, the shaded region corresponds to points that fall in at least two of the half-spaces $H_{a:q}$, $H_{b:q}$, and $H_{c:q}$, and thus any point in this region can be
safely filtered. Facility $d$, however, lies in the region that can be filtered, so it is not used to build bisectors for spatial pruning. 
User $p$ lies inside the prunable region and therefore cannot be an \RkNN{} candidate. 
As TPL’s filtering phase may leave false positives like SIX, the algorithm includes a subsequent refinement stage to verify the remaining candidates.

\textbf{Influence Zone (InfZone)}~\cite{InfZone} relies on the concept of the \textit{influence zone}, which represents the region in which a user is guaranteed to be an \RkNN{} of the query point $q$. 
In other words, a point $p$ is an \RkNN{} of $q$ if and only if $p$ lies inside the influence zone. 
To estimate the influence zone, InfZone incrementally intersects the half-spaces generated by the facilities. 
The construction of the influence zone, illustrated in \autoref{fig:inf}, is also driven by the half-spaces of facilities relative to the query point \(q\). As each facility is processed, the region is filtered by removing any area excluded by at least \(k\) facilities, and the remaining area is the influence zone. Facilities whose bisectors cannot affect the influence zone (like facility \(d\) in \autoref{fig:inf}) are ignored.
The InfZone algorithm builds the zone by tracking the intersections of unpruned bisectors. A critical insight is that the influence zone is star-shaped~\cite{inf_star}, which enables an efficient pruning strategy: a facility \(f\) is unnecessary if \(\min dist(f, v) > dist(v, q)\) for every convex vertex \(v\) of the current zone. However, a direct check against all \(O(m^2)\) vertices, where $m$ is the number of bisectors, is computationally heavy. To mitigate this, two inexpensive filters are applied first: (1) a facility \(f\) can be directly pruned if
\begin{equation}\label{eq:cheapinf}
    dist(f, q) > 2 \times \max_{v \in V} dist(v, q),
\end{equation}
and (2) a facility \(f\) cannot be pruned if
\begin{equation}
    dist(f, q) < 2 \times \min_{p \in E} dist(p, q),
\end{equation}
where \(V\) is the vertex set and \(E\) is the boundary of the influence zone. The expensive vertex check is only performed for facilities that do not meet either of these two conditions, which significantly reduces the overall computation cost.
The verification stage in InfZone consists of a single geometric containment check. Any user located inside the influence zone is directly reported as an \RkNN{}, whereas any user outside it is immediately filtered out. This mechanism allows the algorithm to avoid costly distance computations and eliminates the need for a candidate examination process.

\textbf{SLICE}~\cite{slice} enhances the filtering power of the regions-based pruning, which was first proposed in the SIX approach~\cite{six} while retaining its computational efficiency. 
SLICE partitions the space around the query point $q$ into 12 regions of equal angular extent, a number determined to be optimal. Within a partition \(P\), a facility \(f\) defines two key arcs: an upper arc, which describes the area that can be pruned by \(f\), and a lower arc, which defines the area that is safe from \(f\). 
For any partition $P$ and facility $f$, the radius of two arcs can be determined in $O(1)$ time by evaluating the maximum and minimum subtended angles of $f$ over $P$, which are equivalent to the distances from the intersection points between $f$’s bisector $B_{f:q}$ and the two radial boundaries of $P$.
During the pruning phase, for each partition, SLICE maintains the \(k\)-th smallest upper arc as its \textit{bounding arc} (\(r^R_P\)). 
Any point \(p\) in \(P\) with \(dist(p, q) > r^R_P\) is guaranteed to be pruned by at least \(k\) facilities and is filtered. 
As shown in \autoref{fig:slice}, the shaded area is the space pruned by facilities $a, b, c$ and $d$.
We can see that in SLICE, each facility can potentially contribute to the spatial pruning for multiple partitions.
Similar to SIX and TPL, SLICE's filtering phase may also leave false positives; therefore, a verification phase is necessary.
For the verification phase, candidate users are those lying inside their partition's bounding arc. A facility is considered \textit{significant} for a partition if its lower arc is smaller than the partition's bounding arc, meaning it can potentially prune candidates. SLICE creates a sorted \textit{significant list} of these facilities for each partition. To verify a candidate user \(u\), the algorithm checks facilities in this sorted list, counting how many prune \(u\). The process stops early if \(k\) pruning facilities are found, which means $u$ can be filtered or if the lower arc of the current facility exceeds \(dist(u, q)\), which means $u$ is confirmed as an \RkNN{} result of $q$.

\begin{figure*}[t]
    \centering
    \includegraphics[width=1\linewidth]{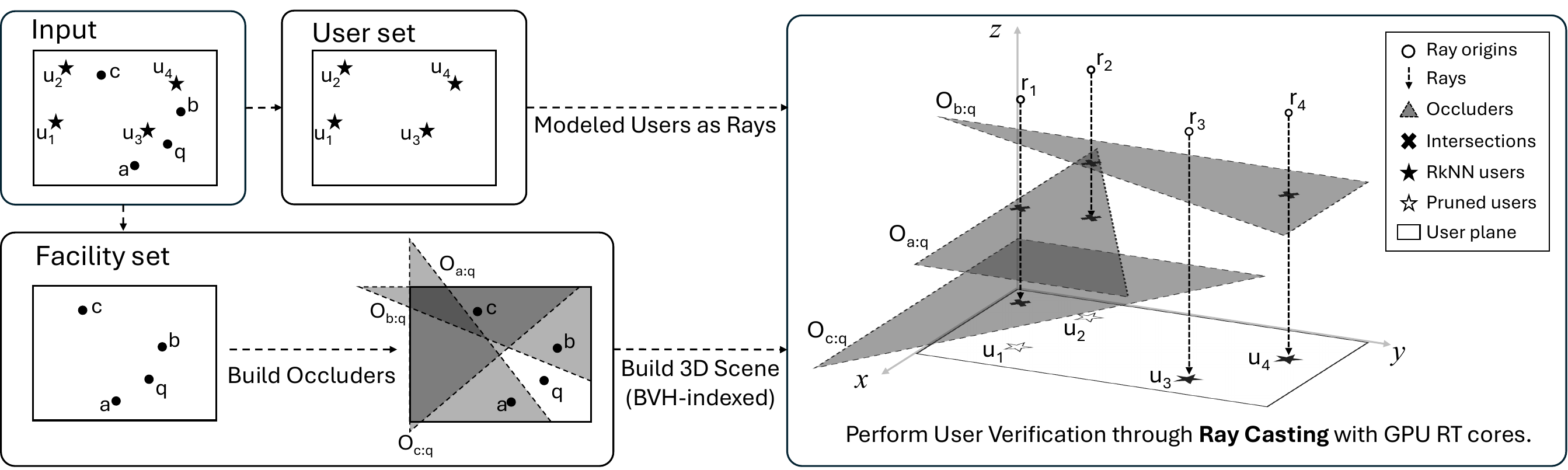}
    \caption{Overview of the \method{} workflow. The input is separated into a user set and a facility set. Users are modeled as rays, while facilities are used to construct occluders based on the invalid sides of pairwise bisectors. All occluders are embedded as layered triangles in a 3D scene indexed by a BVH. During query processing, each user ray is cast perpendicularly to $x,y$-plane. A user is reported as an \RkNN{} result if and only if its ray intersects fewer than $k$ occluders.}
    \Description{overview.}
     \label{fig:overview}
\end{figure*}

\subsection{RT Cores in Spatial Queries}\label{related-rt}
In recent years, 
RT cores have transcended graphics, emerging as a powerful primitive for accelerating database operations including indexing, scan, and query~\cite{RTIndex, RTScan, RayDB}, as well as spatial and distance-based queries such as \kNN{}~\cite{RTNN,RTkNNS,Arkade}, high-dimensional approximate nearest neighbor search~\cite{JUNO}, DBSCAN clustering~\cite{RTDBSCAN}, spatial joins~\cite{RayJoin}, and spatial indexing~\cite{LibRTS}. By encoding spatial proximity as geometric primitives and traversing a Bounding Volume Hierarchy (BVH)~\cite{bvh} on dedicated hardware, RT cores sidestep the branching that degrades standard SIMT execution, reducing distance comparisons to hardware-level intersection tests and enabling concurrent evaluation of millions of user--facility relationships at low latency.
Among these tasks, RT-core-accelerated \kNN{} is the most closely related to our problem, as both involve point-to-point distance evaluation. 
However, all existing \kNN{} methods~\cite{Arkade,RTNN,RTkNNS} using RT cores adopt a query-centered formulation: each point acts as a ray origin, and candidate neighbors are identified through intersection tests or distance filtering relative to that single origin of query point. While this design aligns well with the objective of \kNN{} search, it is fundamentally incompatible with the semantics of \RkNN{}, where the goal is to determine for which users a facility remains competitive against all other facilities. Consequently, query-centered formulations cannot capture the required cross-facility ranking relationships inherent in \RkNN{}, thereby motivating the development of a distinct geometric formulation.

\section{Proposed Approach: \method{}}\label{sec:rt-rknn}
In this section, we describe how an \RkNN{} query can be formulated as a graphics ray casting problem, prove its correctness, and present our algorithm based on this new geometric interpretation that directly aligns with dedicated hardware ray tracing cores on modern GPUs.
{\color{black}To offer a high-level view of the proposed method, \autoref{fig:overview} summarizes the end-to-end \method{} workflow, from inputs to constructing occluders and the BVH-indexed scene, and finally performing ray-casting–based user verification on GPU RT cores.}

\subsection{\RkNN{} Formulation as Graphics Ray Casting}\label{sec:reformulation}
Ray casting, a fundamental operation in modern computer graphics~\cite{raycasting}, determines how rays interact with geometric primitives in a scene. A ray is emitted from an origin and travels through a 3D environment while reporting intersections with objects such as triangles. 
Formally, in ray casting, a ray is defined as
\begin{equation}
\mathbf{r}(t) = \mathbf{o} + t \cdot \mathbf{d}, \label{eq:ray}
\end{equation}
where \( \mathbf{o} \in \mathbb{R}^3 \) is the ray origin, \( \mathbf{d} \in \mathbb{R}^3 \) is the ray direction, and \( t \in [t_{\min}, t_{\max}] \) specifies the ray's active interval. 
A ray intersects a geometric primitive $P$ if and only if
\begin{equation}
\{\mathbf{r}(t) \mid t \in [t_{\min}, t_{\max}]\} \cap P \neq \emptyset .
\label{eq:ray-intersects-P}
\end{equation}
When \(P\) is a triangle with vertices \( \mathbf{v}_0, \mathbf{v}_1, \mathbf{v}_2 \in \mathbb{R}^3 \), the widely used Möller--Trumbore test~\cite{moller} determines an intersection by solving
\begin{equation}
\mathbf{r}(t) = (1 - a - b)\mathbf{v}_0 + a\mathbf{v}_1 + b\mathbf{v}_2, \label{eq:intersection}
\end{equation}
for barycentric coordinates \( a \ge 0 \), \( b \ge 0 \), and \( a + b \le 1 \). These concepts, such as rays, triangle primitives, and ray--primitive intersection tests, form the basis for the geometric constructions used in the following subsection.
Although a single ray may intersect multiple primitives, real-time graphics pipelines typically limit the number of processed intersections for performance reasons~\cite{ue}. 
This operational model naturally aligns with the structure of \RkNN{} queries.

Half-space pruning techniques for \RkNN{} rely on perpendicular bisectors between facilities, where each bisector separates space into a region in which the query facility is closer and a region in which a competing facility is closer. We reinterpret this geometric relationship directly as a graphics ray casting problem: users act as ray origins, and the invalid half-spaces induced by facility pairs are encoded as occluding geometric primitives. Under this formulation, determining whether a user belongs to the \RkNN{} result reduces to counting the number of occluder primitives intersected by their corresponding ray. If a ray intersects at least $k$ such occluders, then at least $k$ competing facilities are closer to that user than the query facility, and the user can be pruned.

For each facility pair with respect to a query facility, the invalid side of their bisector can be represented as one or more triangular occluders. 
\autoref{fig:occluder-cases} illustrates the four possible occluder construction scenarios:
(a) The normal case, where the invalid side forms a triangle naturally;
(b) The extended case, where the invalid region of a non-vertical and non-horizontal bisector does not naturally form a triangular shape because the bisector extends beyond the bounded domain, and thus the exact polygonal boundary is replaced by a single triangle that fully covers the intended invalid space;
(c) The vertical bisector case; and
(d) The horizontal bisector case.
When the bisector is neither vertical nor horizontal, the invalid region forms a triangle. 
Otherwise, the invalid region becomes rectangular and can be modeled using two triangles.
Each occluder is constructed by intersecting the bisector with the rectangular domain boundary, producing two intersection points. Together with one or two boundary vertices lying on the invalid side, these points form either a single triangle or two triangles, depending on the shape of the invalid region. 

\begin{figure}[t]
    \centering

    \begin{subfigure}{0.43\linewidth}
        \centering
        \includegraphics[width=\linewidth]{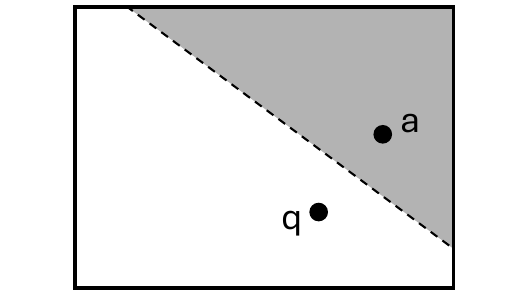}
        \caption{Normal case.}
        \label{fig:a}
    \end{subfigure}
    \hfill
    \begin{subfigure}{0.45\linewidth}
        \centering
        \includegraphics[width=\linewidth]{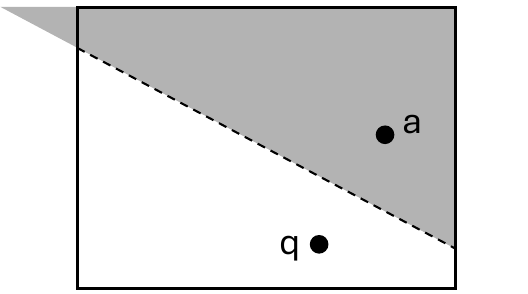}
        \caption{Extended case.}
        \label{fig:b}
    \end{subfigure}

    \vspace{0.3cm}

    \begin{subfigure}{0.45\linewidth}
        \centering
        \includegraphics[width=\linewidth]{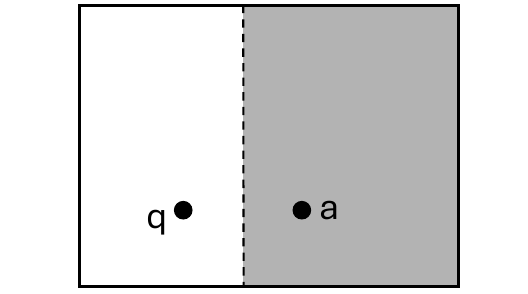}
        \caption{Vertical case.}
        \label{fig:c}
    \end{subfigure}
    \hfill
    \begin{subfigure}{0.45\linewidth}
        \centering
        \includegraphics[width=\linewidth]{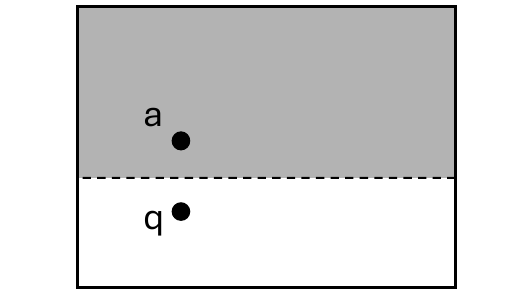}
        \caption{Horizontal case.}
        \label{fig:d}
    \end{subfigure}

    \caption{Four occluder construction scenarios.}
    \label{fig:occluder-cases}
\end{figure}

As illustrated in {\color{black}the right side of} \autoref{fig:overview}, we embed all the occluders of corresponding facilities into a simple 3D \emph{scene} to align the representation with the ray casting direction and to avoid geometric degeneracies. 
Users lie on a plane (e.g., $z = 0$), and rays are cast perpendicularly through the scene from the same $(x,y)$ coordinates as the users but with a larger $z$ value. The occluders are positioned as horizontal layers parallel to the $xy$-plane, each placed at a distinct height, ensuring rays traverse the occluders in a consistent front-to-back order before reaching the user plane.
In this 3D embedding, each user corresponds to a single ray shot through the layered occluders. 
The \RkNN{} query reduces to counting the number of occluders intersected by this ray. 
If the ray hits at most $k\!-\!1$ occluders, the user does not have more than $k$ closer facilities than query $q$ and remains as one of the query results. 
Conversely, if it intersects $k$ or more occluders, the user can be pruned. 
Because the occluders are arranged as discrete layers, intersections occur in strictly ordered depth, enabling early termination once the hit count reaches~$k$.
\autoref{fig:overview} illustrates the example with $k=2$: rays $r_3$ and $r_4$ intersect fewer than two occluders and are reported as \RkNN{} results of $q$, while $r_1$ and $r_2$ each hit at least two occluders and the corresponding users are pruned.


To formalize the notion of \textit{occluder}, \textit{scene} and \textit{ray}, we provide the formal notation used in our geometric representation.
\begin{definition}[Occluder]
\label{def:triangle}
Let $R \subset \mathbb{R}^2$ be a rectangular domain, and let $a, q \in R$ be
two facilities with coordinates $(x_a,y_a)$ and $(x_q,y_q)$, where $q$ is the query facility and $B_{a:q}$ denotes the
perpendicular bisector between $a$ and $q$. 
If $B_{a:q}$ is neither vertical nor horizontal, among the four vertices of $R$
we select the vertex lying on the invalid side of $B_{a:q}$ and farthest in the
direction from $a$ toward $q$, denoted by $v$. The bisector intersects the two
boundary edges incident to $v$, producing two distinct intersection points
$p_1$ and $p_2$.
If $B_{a:q}$ is vertical or horizontal, there are two farthest vertices of $R$
on the invalid side of the bisector with equal distance to $B_{a:q}$, denoted by
$v_1$ and $v_2$.
The bisector intersects the two boundary edges of space $R$ are denoted as $p_1$ and $p_2$.
To embed the geometry into 3D space, we assign a unique height
$z_{a:q} > 0$ shared by all lifted points
($v, p_1, p_2$ or $v_1, v_2, p_1, p_2$) such that no two occluders are associated
with different facility pairs sharing the same $z$-coordinate.

The occluder corresponding to the facility pair $(a,q)$ is then defined as
\begin{equation}\small
O_{a:q} =
\begin{cases}
\triangle(v, p_1, p_2), 
& \text{if } x_a \neq x_q \text{ and } y_a \neq y_q,\\[6pt]
\triangle(v_1, p_1, p_2) \cup \triangle(v_1, v_2, p_2), 
& \text{if } x_a = x_q \text{ or } y_a = y_q.
\end{cases}
\end{equation}
\end{definition}

\begin{definition}[Scene]
\label{def:scene}
Let $F$ be the set of all facilities and let $q \in F$ be the query facility.
For every facility $a \in F \setminus \{q\}$, let $O_{a:q}$ denote the occluder associated with the pair $(a,q)$ as defined in Definition~\ref{def:triangle}. The \emph{scene} corresponding to query facility $q$ is defined as the set:
\begin{equation}
\mathcal{T}_q = \bigl\{\, O_{a:q} \;\bigm|\; a \in F \setminus \{q\} \,\bigr\}.
\end{equation}
\end{definition}

\begin{definition}[Ray]
For a user $u$ with its coordinates $\mathbf{u} \in \mathbb{R}^2$, as rays are cast from the same $(x,y)$-coordinates of users and projected perpendicular to $xy$-plane, we define the ray corresponding to $u$ as:
\begin{equation}
\mathbf{r}_{u} = (\mathbf{u},z_u) + t \cdot (0,0,-1), \label{eq:ray-u}
\end{equation}
based on \eqref{eq:ray}, where $z_u$ is large enough to ensure that the ray origin is higher than all occluders $O \in \mathcal{T}_q$ and $t \in [0,z_u]$.
\label{def:ray}
\end{definition}

\subsection{Theoretical Equivalence and Correctness}
To demonstrate that the ray casting formulation faithfully preserves the semantics of \RkNN{}, we define the hit count $\mathcal{H}_q(\mathbf{r}_u)$ as the number of occluders intersected by the ray $\mathbf{r}_{u}$:
\begin{equation}
    \mathcal{H}_q(\mathbf{r}_u)
    =
    \left|
      \left\{
        O \in \mathcal{T}_q
        \,\middle|\,
        \exists t \in [0,z_u]
        \text{ satisfying }
        \eqref{eq:ray-intersects-P}
      \right\}
    \right|.
    \label{eq:H-definition}
\end{equation}
Using this definition, our ray casting interpretation becomes equivalent to the classical \RkNN{} condition:
\begin{equation}
    u \in \mathrm{RkNN}(q)
    \quad\Longleftrightarrow\quad
    \mathcal{H}_q(\mathbf{r}_u) < k .
    \label{eq:visibility_equivalence}
\end{equation}

\begin{figure}[t] 
  \centering \includegraphics[width=0.9\linewidth]{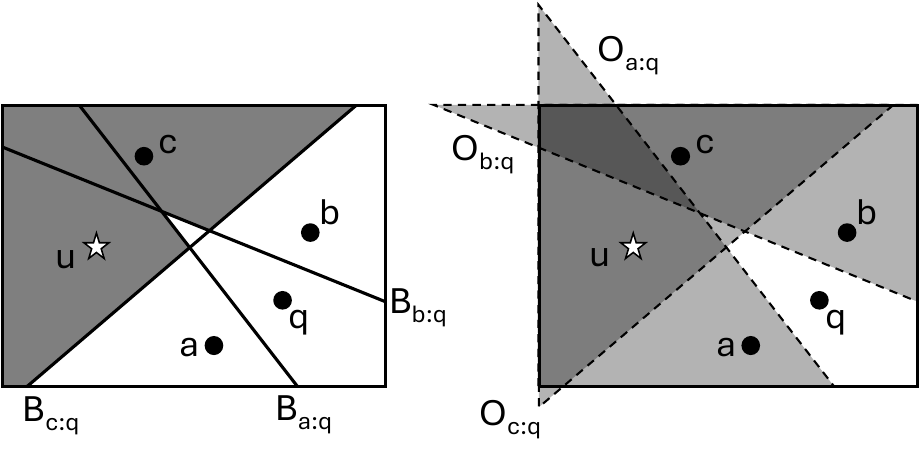} 
  \caption{Equivalence between classical \RkNN{} with ray-casting-based formulation($k=2$) . Half-space pruning (left). Ray casting formulation (right).} 
  \Description{3d-rknn illustration.} 
  \label{fig:rt=rknn} 
\end{figure}

\begin{lemma}[Correctness of Ray Casting Formulation]
\label{lem:visibility-correctness}
For any facility $q$ and user $u$, let $\mathbf{r}_u$ be defined as in Definition~\eqref{eq:ray-u} and $\mathcal{H}_q(u)$ be defined as in Equation~\eqref{eq:H-definition}. 
Then $u$ is a reverse $k$ nearest neighbor of $q$ if and only if $\mathcal{H}_q(\mathbf{r}_u) < k$, i.e., Equation~\eqref{eq:visibility_equivalence} holds.
\end{lemma}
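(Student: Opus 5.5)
The argument reduces the equivalence to a per-occluder statement. For each competitor $a \in F\setminus\{q\}$, I will show that
\[
\mathbf{r}_u \text{ intersects } O_{a:q} \iff \mathrm{dist}(u,a) < \mathrm{dist}(u,q).
\]
Summing over $a$ then gives
\[
\mathcal{H}_q(\mathbf{r}_u)=\bigl|\{a\in F\setminus\{q\} \mid \mathrm{dist}(u,a)<\mathrm{dist}(u,q)\}\bigr| .
\]
The characterization of \RkNN{} membership stated in Section~\ref{sec:def} says that $u\in\mathrm{RkNN}(q)$ iff this count is $<k$. Together these yield Equation~\eqref{eq:visibility_equivalence} directly.

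The first step separates the 3D geometry from the 2D geometry. Each occluder $O_{a:q}$ lies in the horizontal plane $z=z_{a:q}$, and the heights satisfy $0<z_{a:q}<z_u$. The ray $\mathbf{r}_u$ is vertical with $t\in[0,z_u]$, so it meets that plane at exactly one point, namely $(\mathbf{u},z_{a:q})$. Hence $\mathbf{r}_u\cap O_{a:q}\neq\emptyset$ iff $\mathbf{u}\in\pi(O_{a:q})$, where $\pi$ is projection to the $xy$-plane. The distinct heights guarantee that each competitor's occluder is a separate primitive, so distinct $a$ are never merged into one hit and no hit is double counted. The two triangles in the vertical and horizontal cases belong to the same occluder, so they contribute a single hit.

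The second step is to show that $\pi(O_{a:q})\cap R$ equals the invalid half-plane $H_{a:q}\cap R$, where $H_{a:q}=\{p \mid \mathrm{dist}(p,a)<\mathrm{dist}(p,q)\}$. The proof is by cases matching Definition~\ref{def:triangle}.
\begin{itemize}
\item \textbf{Axis-aligned bisector.} The invalid part of $R$ is a rectangle with corners $p_1,p_2,v_2,v_1$. The union of the two triangles in the definition is exactly that rectangle, since they split it along a diagonal.
\item \textbf{Normal case.} The invalid part of $R$ is the triangle cut off at the corner $v$ by the chord $p_1p_2$, which equals $\triangle(v,p_1,p_2)$.
\item \textbf{Extended case.} When the bisector crosses opposite edges of $R$, the paper's triangle through $v$ is built from the bisector's intersections with the lines extending the edges incident to $v$. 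I argue by convexity: the triangle is bounded by the bisector line and the two supporting lines through $v$. Restricted to $R$, it therefore coincides with the invalid half-plane. Containment in $H_{a:q}$ holds because the triangle lies on the side of $B_{a:q}$ containing $v$. Covering holds because every point of $R\cap H_{a:q}$ lies within both supporting half-planes at $v$.
\end{itemize}
Since $u\in R$, in all cases $\mathbf{u}\in\pi(O_{a:q})$ iff $\mathrm{dist}(u,a)<\mathrm{dist}(u,q)$. This gives the per-occluder equivalence and finishes the proof.

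I expect the main obstacle to be the extended case, together with the boundary convention. For the extended case, I need to verify that choosing $v$ as the vertex "farthest in the direction from $a$ toward $q$" actually puts the invalid region inside the wedge at $v$. I would check this via the sign of the normal $\mathbf{a}-\mathbf{q}$ relative to the edge directions of $R$. For the boundary, a user lying exactly on $B_{a:q}$ is equidistant to $a$ and $q$ and should not count, while closed triangles would count it. I would handle this either by assuming general position (no user on a bisector) or by noting that the strict inequality in the \RkNN{} condition matches a half-open convention for intersection on the bisector edge, and I would state whichever assumption is needed explicitly.
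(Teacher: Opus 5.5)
Your proposal is correct and follows the same route as the paper. You establish that $\mathbf{r}_u$ hits $O_{a:q}$ exactly when $\mathrm{dist}(u,a)<\mathrm{dist}(u,q)$ and then count hits against $k$, but you prove both directions of this per-occluder equivalence (projection to the $xy$-plane, coverage of $R\cap H_{a:q}$ in every occluder case, using $u\in R$), whereas the paper's proof only asserts it. The two caveats you raise are genuine and the paper glosses over both: $v$ must be taken as the maximizer of $\langle\cdot,\mathbf{a}-\mathbf{q}\rangle$ over $R$, in which case your supporting-half-plane argument works no matter which edges the bisector crosses, including when three corners of $R$ are invalid; and users lying on some $B_{a:q}$ need a general-position or half-open convention, since closed triangles would otherwise count a tie as a hit.
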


\begin{proof}
Consider the perpendicular bisector between $q$ and any competing facility $f$.
Consistent with concepts used in related work, the invalid side of this bisector is precisely the region in which $f$ is closer to $u$ than $q$. 
Whenever a ray $\mathbf{r}_u$ cast from $u$ perpendicularly to the user plane intersects the occlusion triangle, it means $u$ lies in a space indicating that $f$ is strictly closer to $u$ than $q$. 
Each intersection counted in $\mathcal{H}_q(\mathbf{r}_u)$ therefore corresponds to one distinct facility is closer than $q$ with respect to $u$.

Therefore, if $\mathcal{H}_q(\mathbf{r}_u) \ge k$, then at least $k$ facilities are closer to $u$ than $q$, so $q$ cannot be among the $k$ nearest neighbors of $u$, implying $u \notin \mathrm{RkNN}(q)$. 
Conversely, if $\mathcal{H}_q(u) < k$, then fewer than $k$ facilities closer than $q$ for $u$, meaning that $q$ is within the $k$ nearest facilities to $u$, and therefore $u \in \mathrm{RkNN}(q)$. This completes the proof.
\end{proof}

\autoref{fig:rt=rknn} shows the equivalence between classical \RkNN{}'s half-space pruning and our ray casting formulation using an example of finding \RkNN{} of facility $q$ with competitor facilities $a,b$, and $c$. 
The shaded area in the figure on the left demonstrates the pruned space under two invalid sides of bisectors, which corresponds to the regions shaded with two or more gray levels in figure on the right, which demonstrates an area covered by at least two occluders in a top-down view. Users like $u$ that lie in these areas can be filtered directly.

\subsection{Data Indexing and Proposed Algorithm} \label{alg}
However, in our ray casting formulation, solving Möller–Trumbore intersection tests in Equation~(\ref{eq:intersection}) for all ray–primitive pairs are impractical as the number of rays and primitives increases.
To address this, we apply a well-established technique in computer graphics, bounding volume hierarchy (BVH)~\cite{bvh}, to index the primitives, which significantly reduces the number of intersection tests.
A BVH is a tree data structure that organizes geometric primitives into a hierarchy of nested bounding volumes, typically axis-aligned bounding boxes (AABBs). Each node in the tree represents an AABB that contains either a group of child AABBs or primitives. This structure allows the ray--primitive intersection test to be performed as an efficient single-path ray traversal over the BVH. 

\begin{figure}[t]
  \centering
  \includegraphics[width=1\linewidth]{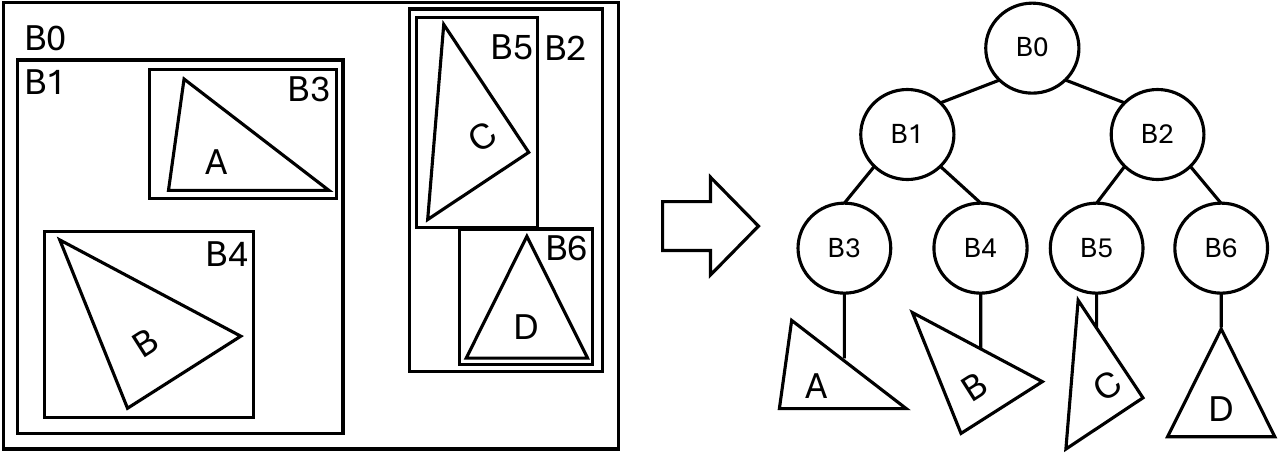}
  \caption{A BVH example (right) corresponding to a 2D scene (left) with four distinct triangle primitives (A, B, C, and D). The tree nodes, labeled B0 to B6, represent the corresponding AABBs in the 2D space. Primitives always appear as leaf nodes in the BVH.}
  \label{fig:bvh}
\end{figure}

\autoref{fig:bvh} illustrates a BVH constructed for a 2D scene containing four distinct triangle primitives. For example, if a ray intersects primitive C, the traversal path to locate the intersection would be $B0 \rightarrow B2 \rightarrow B5 \rightarrow C$.
Notably, a ray may intersect multiple primitives, so its traversal path through the scene is not necessarily linear. In a 3D scene, the BVH is typically built by recursively dividing the space along the longest axis, creating two subspaces and forming corresponding AABBs.

Compared with R-trees and their variants, BVHs are preferred for ray casting because their hierarchical structure is designed specifically to optimize ray--primitive intersection queries. In a BVH, the bounding volumes of sibling nodes do not overlap, which ensures that each primitive resides in exactly one leaf node and prevents redundant traversal of the same geometry. This non-overlapping organization allows rays to move through a clean and efficient hierarchy without repeated tests. In addition, BVH traversal follows a simple and predictable pattern in which the algorithm descends the hierarchy, tests bounding volumes, and prunes subtrees that cannot be intersected. These properties make BVHs highly efficient for processing the millions of rays issued in modern graphics workloads and a natural fit for ray casting applications.

\begin{algorithm}[t]
\caption{\method{} Query for Facility $q$.}
\label{alg:rt-rknn}
\begin{algorithmic}[1]

\REQUIRE Facility $q$; user set $U$; facility set $F$; pruning threshold $k$.
\ENSURE RkNN users set of facility $q$: $\mathcal{R}_q$.

\vspace{0.5em}
\STATE \textbf{Scene Construction: Construct occluders.}
\STATE $\mathcal{B}_q \gets InfZone\_pruning(q,F,k)$.
\STATE $\mathcal{T}_q \gets \emptyset$
\FOR{each facility $f \in \mathcal{B}_q$}
    \STATE Build occluder $O_{f:q}$ based on Definition~\ref{def:triangle}.
    \STATE $\mathcal{T}_q \gets \mathcal{T}_q \cup O_{f:q}$
\ENDFOR
\STATE Build a BVH over $\mathcal{T}_q$.

\vspace{0.5em}
\STATE \textbf{Ray Casting: Count intersections along user rays.}
\STATE $\mathcal{R}_q \gets \emptyset$
\FOR{each user $u \in U$}
    \STATE Construct a ray $\mathbf{r}_u$ from $u$ based on Definition~\ref{eq:ray-u}.
    \STATE $c \gets 0$
    \STATE Traverse the BVH of $\mathcal{T}_q$.
    
    \FOR{each occluder $O$ intersected by $\mathbf{r}_u$}
        \STATE $c \gets c + 1$
        \IF{$c \ge k$}
            \STATE \textbf{break}   \COMMENT{Early exit: user is prunable}
        \ENDIF
    \ENDFOR
    \IF{$c < k$}
        \STATE $\mathcal{R}_q \gets \mathcal{R}_q \cup \{u\}$
    \ENDIF
\ENDFOR

\RETURN $\mathcal{R}_q$

\end{algorithmic}
\end{algorithm}

Algorithm~\ref{alg:rt-rknn} summarizes our graphics ray casting approach for processing \RkNN{} queries. The computation proceeds in two phases: \emph{scene construction}, where occluders associated with facilities are embedded into a 3D representation; and \emph{ray casting}, where rays representing users are issued to determine whether each user can be pruned.
During scene construction (Lines~1--8), for each facility \(f \in F \setminus \{q\}\), we compute the perpendicular bisector with \(q\) and encode the invalid side as occlusion triangles following Definition~\ref{def:triangle}. 
To avoid constructing unnecessary occluders and to eliminate the need for a candidate verification phase after filtering like SIX~\cite{six}, TPL~\cite{tpl} and SLICE~\cite{slice}, we apply InfZone-style pruning~\cite{InfZone}: once a facility's occluder is already fully covered by \(k\) previously constructed occluders, it is discarded because no ray can reach it. 
This pruning significantly reduces the number of remaining facilities, denoted \(m \ll |F|\), and enables full utilization of the GPU during ray casting rather than falling back to CPU-based candidate verification.
The construction of occluders requires \(O(m)\), while the pruning logic contributes \(O(m^2)\), dominated by intersection calculations between bisectors. 
A BVH is then built over the resulting scene \(\mathcal{T}_q\). 
Unlike facility R-trees used in prior work, which can be reused across different queries, the BVH structure must be rebuilt for each query because the occlusion geometry depends on the specific query facility.
While BVH construction could be amortized when the facility set is stable, similar strategies could equally benefit TPL and InfZone via cached bisector sets; for fairness, we apply no such amortization to any method.
As BVH construction typically requires \(O(m \log m)\)~\cite{bvh_c1,bvh_c2,bvh_c3}, the overall complexity of the scene construction phase becomes \(O(m^2 + m \log m)\).

After the scene is built, ray casting begins (Lines~9--24). Each user \(u\) emits a ray \(\mathbf{r}_u\) into the scene, and intersections are evaluated
accordingly. 
Early termination is applied: once \(k\) intersections are detected, the user is pruned (Line~18). Since a single intersection search under BVH traversal has an expected cost of \(O(\log m)\)~\cite{bvh_traversal}, processing a ray with at most \(k\) allowable intersections results in a total expected complexity of \(O(k \log m)\) for determining whether the user belongs to the \RkNN{} result.
It is worth noting that, unlike prior work that indexes users using an R-tree or its variants, our approach does not index users at all. 
While this increases the asymptotic cost of the ray casting phase to \(O(k|U|\log m)\), it enables full exploitation of parallelism, allowing all rays to be processed independently, especially by GPUs associated with RT cores and yielding substantial performance benefits in practice.

\subsection{Implementing \method{} using RT cores}
\label{subsec:optix}

To fully leverage hardware acceleration, we implement the proposed formulation algorithm using RT cores.

RT cores were originally designed to accelerate ray tracing in computer graphics, a more complex form of ray casting that incorporates effects such as reflection, refraction, and shadows. 
Before it was introduced, ray tracing was difficult to scale on GPUs because different rays typically produce different numbers of ray--primitive intersections, resulting in irregular memory access and divergent control flow that conflict with the SIMT execution model. RT cores address these challenges by providing hardware-accelerated ray--triangle and ray--AABB intersection tests during BVH traversal and by scheduling similar intersection tasks together to reduce divergence and improve throughput. After formulating \RkNN{} queries as ray casting, our algorithm can be implemented on RT cores with relatively modest programming effort using their corresponding APIs.

There are several RT APIs, including Nvidia OptiX~\cite{optix}, AMD HIP-RT~\cite{HIP-RT}, DirectX Raytracing~\cite{DXR} and Vulkan~\cite{vulkan}.
All of these APIs share a similar programming model. In this paper, we choose Nvidia OptiX for our implementation due to its cutting-edge performance and strong compatibility with CUDA. It is worth noting that while the implementation in this paper targets Nvidia GPUs’ RT cores, porting the algorithm to AMD GPUs, whose ray tracing hardware is functionally similar, is straightforward.
To use OptiX, developers follow a predefined workflow:
\begin{enumerate}
\item Create a GPU context.
\item Build a BVH.
\item Create a program pipeline.
\item Build the Shader Binding Table (SBT).
\item Launch rays.
\end{enumerate}

\begin{algorithm}[t]
\caption{\method{} Implementation using OptiX.}\label{alg:ray-casting}
\begin{algorithmic}[1]
\REQUIRE The constructed scene $\mathcal{T}_q$, scene height $m$. 
\ENSURE User verification result $isRkNN$.
\vspace{0.5em}
\STATE \textbf{procedure} \texttt{Ray Generation}
    \STATE \hspace{0.5cm} $t_{\min}, t_{\max} \gets \{0, m+1\}$
    \STATE \hspace{0.5cm} $u \gets \texttt{optixGetLaunchIndex()}$
    \STATE \hspace{0.5cm} $\mathbf{o} \gets (u_x,u_y,m+1)$
    \STATE \hspace{0.5cm} $\mathbf{d} \gets (0,0,-1)$
    \STATE \hspace{0.5cm} $c \gets 0$
    \STATE \hspace{0.5cm} $isRkNN \gets $ TRUE
    \STATE \hspace{0.5cm} \texttt{optixTrace}$(\mathcal{T}_q, \mathbf{o}, \mathbf{d}, t_{\min}, t_{\max}, \texttt{payloads}(c,isRkNN))$
\STATE \textbf{end procedure} 
\vspace{0.5em}
\STATE \textbf{procedure} \texttt{Any-hit}
    \STATE \hspace{0.5cm}  $c \gets c + 1$
    \STATE \hspace{0.5cm} \textbf{if}{$ c < k $} \textbf{then}
    \STATE \hspace{0.5cm}\hspace{0.5cm}  \texttt{optixIgnoreIntersection()}
    \STATE \hspace{0.5cm} \textbf{else}
    \STATE \hspace{0.5cm}\hspace{0.5cm}  $isRkNN \gets$ FALSE
    \STATE \hspace{0.5cm}\hspace{0.5cm}  \texttt{optixTerminateRay()}
    \STATE \hspace{0.5cm} \textbf{end if}
\STATE \textbf{end procedure} 
\end{algorithmic}
\end{algorithm}

A context is first created to associate OptiX with a specific CUDA device.
In step (2), a BVH corresponding to the scene is constructed.
The program pipeline in step (3) consists of several programmable stages:
\begin{itemize}[leftmargin=3.5mm]
    \item \textit{Ray generation}: The entry point to the pipeline, defines the ray by origin point $\mathbf{o}$ and direction $\mathbf{d}$ based on equation~(\ref{eq:ray}).
    \item \textit{Intersection}: Implements a ray-primitive intersection test, invoked during BVH traversal. Programmers do not need to implement it when using the default hardware intersection test.
    \item \textit{Any-hit}: Called when a new, potentially closest intersection of a ray is found. Not necessary when only the closest intersection is counted.
    \item \textit{Closest-hit}: Called when the ray finds the closest intersection.
    \item \textit{Miss}: Called after confirming a ray does not hit any primitive in the scene.
\end{itemize}
OptiX uses a single ray programming model~\cite{optix}, where each ray executes this pipeline independently.
The SBT constructed in step (4) links primitives with their associated programs and serves as the binding layer between rays and scene behavior.
Finally, rays are launched and will explore the scene based on the program pipeline defined before.


As shown in Algorithm~\ref{alg:ray-casting}, our OptiX implementation of ray casting stage is straightforward because the formulated \RkNN{} computation aligns naturally with ray tracing. 
Only the \texttt{Ray Generation} and \texttt{Any-hit} programs are required. 
We use \texttt{Any-hit} instead of \texttt{Closest-hit} because we are interested in counting up to \(k\) intersections per ray rather than identifying the nearest one. 
The \texttt{Miss} program can be omitted, as no additional processing is required when a ray misses all occluders. The \texttt{Intersection} program is left
empty to leverage the default hardware-accelerated triangle intersection routine provided by the RT cores.
OptiX will trace the ray just generated by calling \texttt{optixTrace()}.

User's coordinates are embedded in the rays which can be obtained by \texttt{optixGetLaunchIndex()}. The height of the ray origins is set to $m+1$ to ensure it is higher than all occluders in the scene. During traversal, once \(c\) reaches \(k\), the ray is terminated early using \texttt{optixTerminateRay()}. Otherwise, intermediate intersections invoke \texttt{optixIgnoreIntersection()}, allowing traversal to continue. 
Under OptiX’s single ray programming model, the \texttt{Ray Generation} program launches one ray per user and stores both the intersection count \(c\) and the user verification result \(\textit{isRkNN}\) via OptiX \texttt{payloads} variables.

\section{Evaluation}\label{sec:performance}
This section outlines the experimental setup and reports performance with detailed analysis.

\subsection{Experimental Setup}
\textbf{System}:
We compare our \method{} approach against three baseline algorithms introduced in \autoref{sec:related_work}: TPL, InfZone, and SLICE. In the figures, InfZone is abbreviated as \textit{INF}, and our \method{} method is abbreviated as \textit{RT} for clarity.
All algorithms were implemented from scratch, using shared common routines for comparable operations. 
To emphasize query-time efficiency, all baselines use R*-trees rather than standard R-trees, providing a stronger query-time foundation for the competing methods.
The only exception is the BVH, which is constructed exclusively for our \method{} method.
Implementations are written in C++11 and compiled using NVCC~12.4 with -O2 flag, and OptiX~7.7 enabled where applicable. 
Experiments were conducted on a workstation equipped with an AMD EPYC~7453 28-core CPU, 512~GB of system memory, and an Nvidia RTX~A6000 GPU with 48~GB of GPU memory, running Ubuntu 22.04.5~LTS.

\begin{figure}[t]
    \centering
    \begin{subfigure}[b]{0.31\columnwidth}
        \centering
        \includegraphics[width=\textwidth]{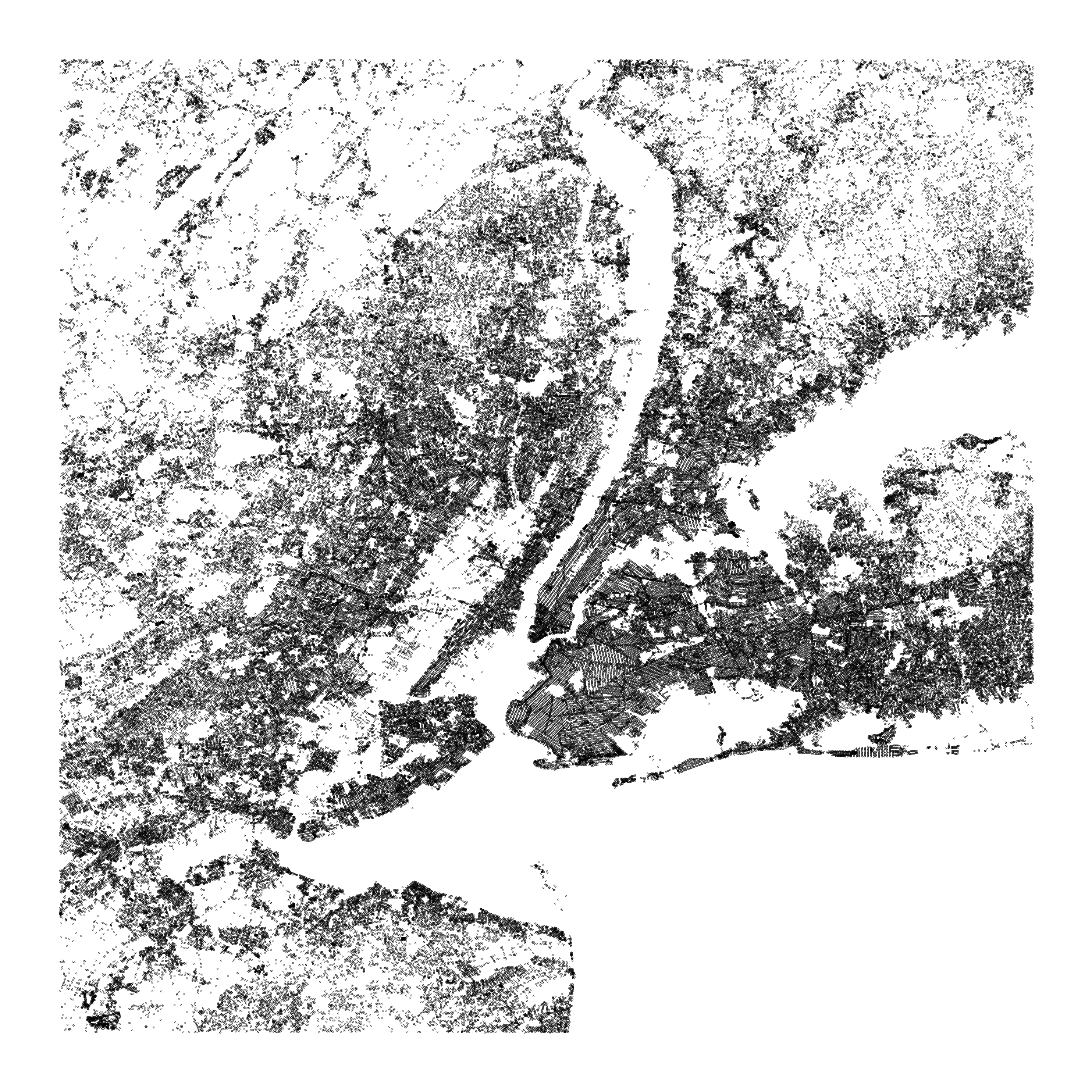}
        \caption{NY}
        \label{fig:sub1}
    \end{subfigure}
    \hfill
    \begin{subfigure}[b]{0.31\columnwidth}
        \centering
        \includegraphics[width=\textwidth]{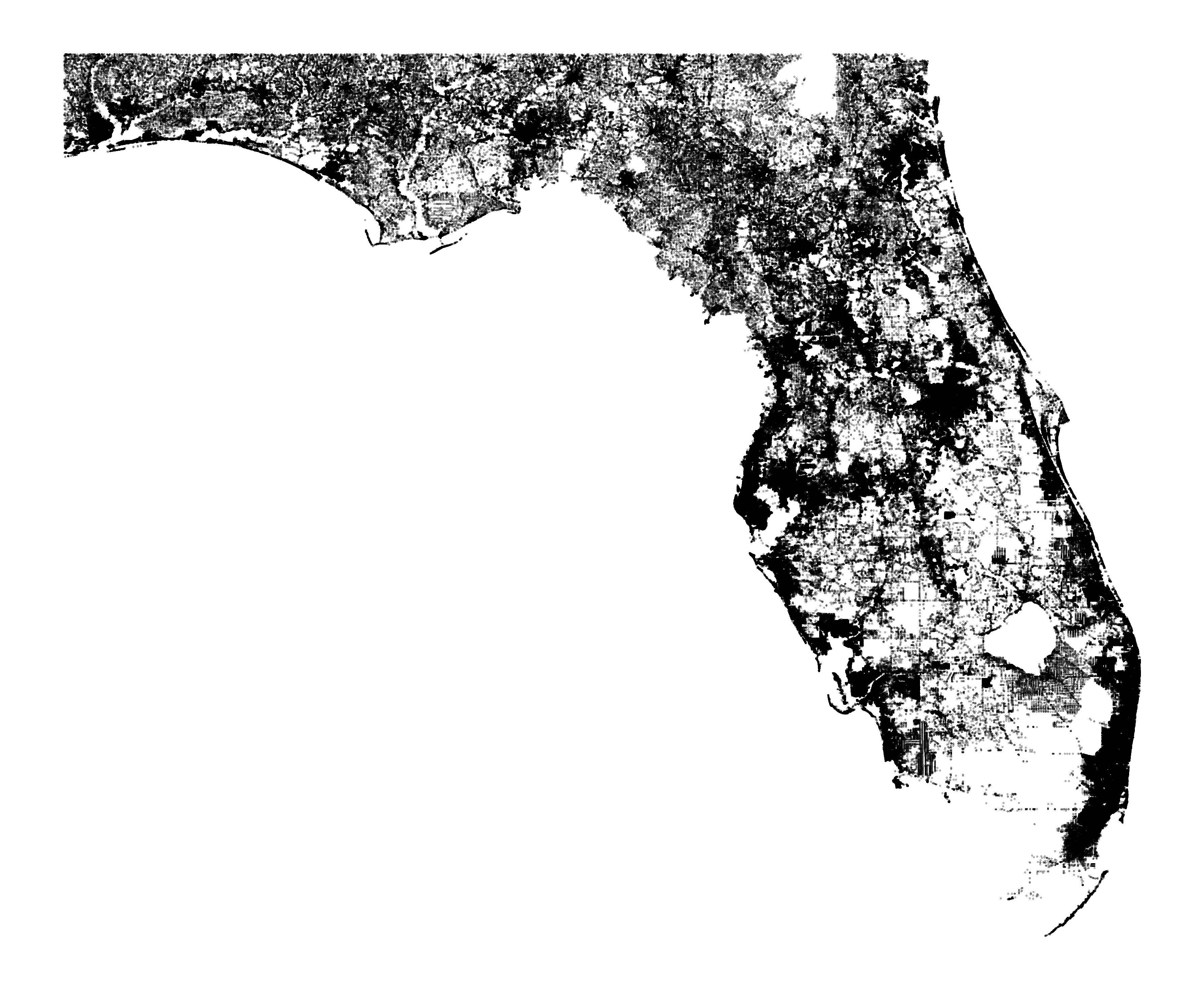}
        \caption{FLA}
        \label{fig:sub2}
    \end{subfigure}
    \hfill
    \begin{subfigure}[b]{0.31\columnwidth}
        \centering
        \includegraphics[width=\textwidth]{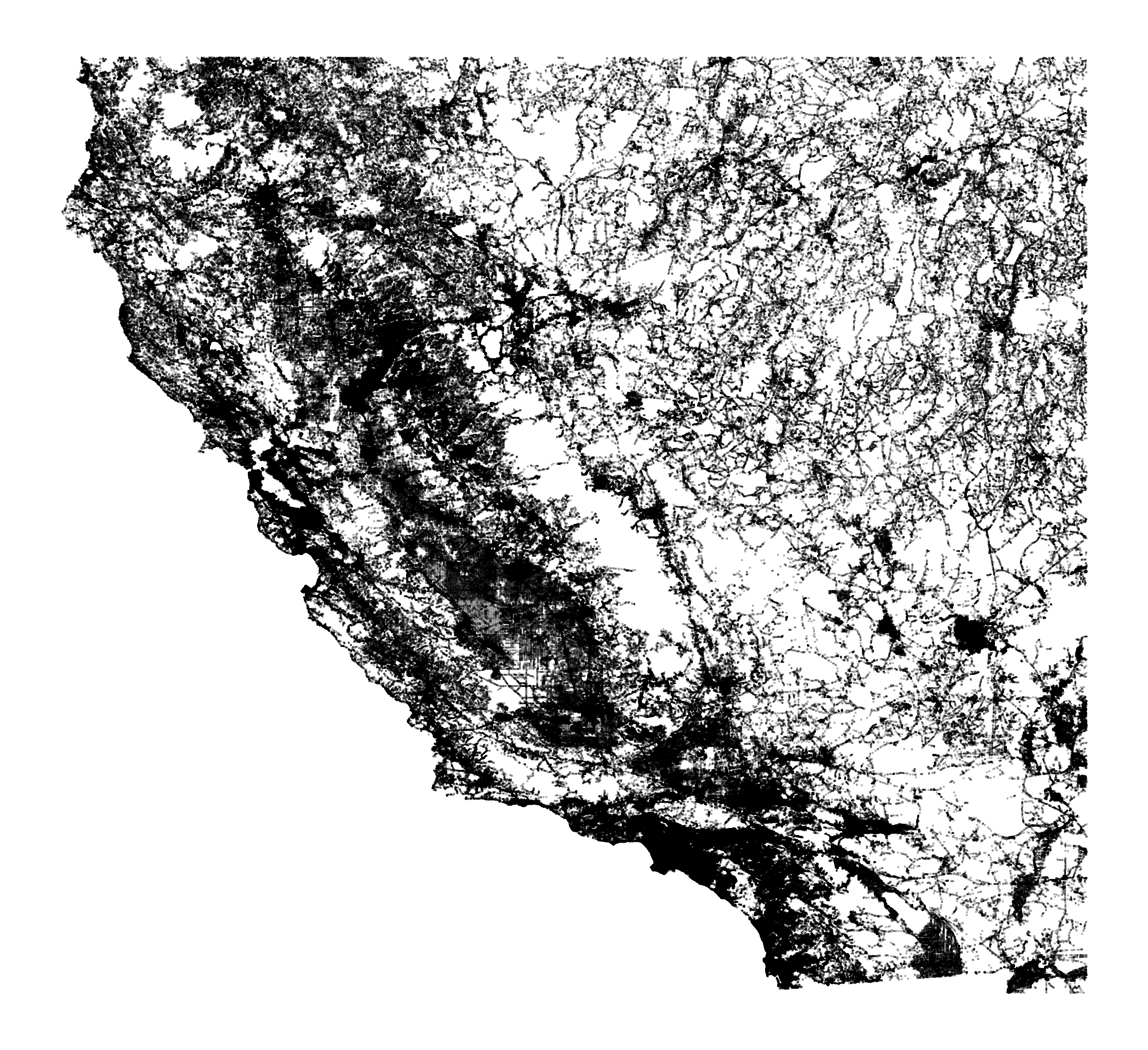}
        \caption{CAL}
        \label{fig:sub3}
    \end{subfigure}
    
    \caption{Visualization of datasets.}
    \label{fig:datasets}
\end{figure}

\begin{table}[t]
\centering
\caption{Six real-world road-network datasets.}
\label{tab:dataset}
\begin{tabular}{l l r}
\toprule
\textbf{Symbol} & \textbf{Description} & \textbf{Number of Points} \\
\midrule
USA  & Full USA                  & 23,947,347 \\
CTR  & Central USA               & 14,081,816 \\
E    & Eastern USA               & 3,598,623 \\
CAL  & California and Nevada     & 1,890,815 \\
FLA  & Florida                   & 1,070,376 \\
NY   & New York City             & 264,346 \\
\bottomrule
\end{tabular}
\end{table}

\textbf{Datasets}:
We use six real-world road-network datasets covering New York City (NY), Florida (FLA), California and Nevada (CAL), Eastern USA (E), Central USA (CTR), and the full USA (USA). These datasets contain between 264,346 and 23,947,347 points and were obtained from the DIMACS repository~\cite{dataset}. The spatial distributions of representative datasets are visualized in \autoref{fig:datasets}, and detailed statistics are provided in \autoref{tab:dataset}.

\textbf{Evaluation Configurations}:
Following prior work, we vary $k$ from 1 to 25 (default $k=10$), with additional values up to 200 to examine large-$k$ scalability. We use two facility settings: a \emph{default} setting of 1,000 randomly selected facilities (aligning with existing baselines) and a \emph{sparse} setting of 100 facilities (representing sparse facility distributions).
For both settings, except for points selected as facilities, all remaining points are used as users.
Results are averaged over 100 random queries for the sparse setting and 1,000 queries for settings with 1,000 or more facilities, ensuring statistical robustness and consistency. All algorithms (including \method{}) follow a two-stage execution model whose stage names and internal logic differ across methods; we adopt the convention of~\cite{tplpp}, calling the first stage \emph{filtering} and the second \emph{verification}, to enable direct runtime-breakdown comparisons across algorithms.

\begin{table}[t]
\centering
\caption{Amortized user indexing cost in dataset USA}
\label{tab:amorized}
\begin{tabular}{lrrrrrr}
\toprule
\textbf{Algorithms} &\textbf{Operation} &  \textbf{Runtime (s)} \\
\midrule
All baselines  & R*-tree construction  & 147.241 \\

\method{} (ours)  & Plain GPU transfer  & 0.010251 \\
\bottomrule
\end{tabular}
\end{table}

\subsection{Amortized Operation Costs}
In all baseline algorithms, R*-trees built over facilities and users are amortized across queries and thus excluded from the performance scaling analysis. \method{} likewise benefits from amortization: although it constructs a facility R*-tree to support InfZone-style pruning during scene construction, it does not require a user R*-tree. Instead, the user set is uploaded once to GPU memory and reused throughout the workload. The same applies to OptiX context and program pipeline creation, which is amortizable and trivial. Although amortizable costs do not contribute to per-query scalability, they remain relevant when comparing overall preprocessing burden across algorithms, especially in workloads with short query sequences. As shown in \autoref{tab:amorized}, \method{} incurs significantly less indexing overhead than the baselines while retaining the advantages of amortized preprocessing, making it comparatively lightweight in preprocessing cost.

\subsection{Impact of Varying k Settings}

\autoref{fig:k-sf} and \autoref{fig:k-nf} analyze the impact of \(k\) on runtime under the sparse facility and default facility settings,
respectively. 
Although only a subset of datasets is shown, all datasets exhibit the same overall trend.
Under the sparse facility setting, \method{} consistently outperforms all baseline algorithms except when \(k = 1\), where all methods complete in
approximately 3~ms. 
Under the default facility density, where baseline algorithms benefit from stronger spatial pruning opportunities, \method{} still outperforms TPL and InfZone across all tested values of \(k\). 
Even when compared with SLICE, which is known to be more resilient to increases in \(k\), \method{} begins to outperform it once \(k\) reaches 10--20.
For the extremely large $k$ setting, we only compare \method{} with SLICE on our largest dataset USA.
\autoref{fig:largek} illustrates that for extremely large $k$ scenarios, our \method{} outperforms the state-of-the-art algorithm SLICE in all cases.
Remarkably, our \method{} achieves up to 16.4$\times$ speedup over SLICE, 34.6$\times$ over InfZone, and as much as 54.7$\times$ over TPL when $|F| = 100$ and $k = 25$ on the CAL dataset.
This result reflects the issue discussed in \autoref{sec:related_work}: as \(k\) increases, pruning-based methods lose pruning effectiveness, resulting in larger candidate sets and substantially more verification work. 
In contrast, for \method{}, increasing \(k\) simply allows a ray to continue traversal rather than terminating early. 
The algorithm, therefore, degrades gracefully, requiring only additional intersection counting while still benefiting from hardware acceleration. 
Consequently, the impact of larger \(k\) values on runtime remains modest compared with traditional approaches.

\begin{figure}[t]
\centering

\begin{minipage}[t]{0.48\textwidth}
    \centering
    \begin{subfigure}[b]{0.48\textwidth}
        \centering
        \includegraphics[width=\textwidth]{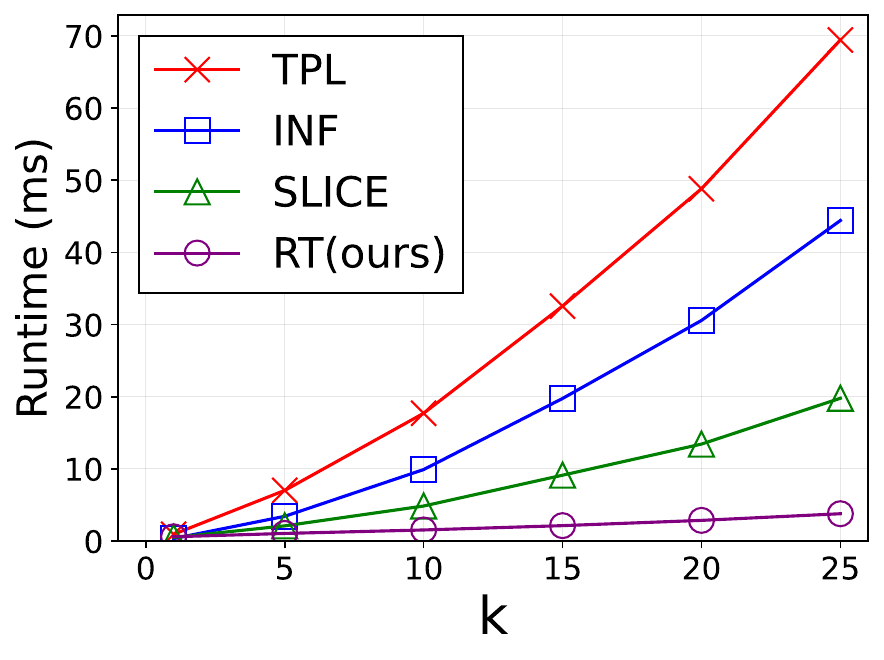}
        \caption{NY (setting $|F|=100$)}
        \label{fig:k_ny_f100}
    \end{subfigure}
    \hfill
    \begin{subfigure}[b]{0.48\textwidth}
        \centering
        \includegraphics[width=\textwidth]{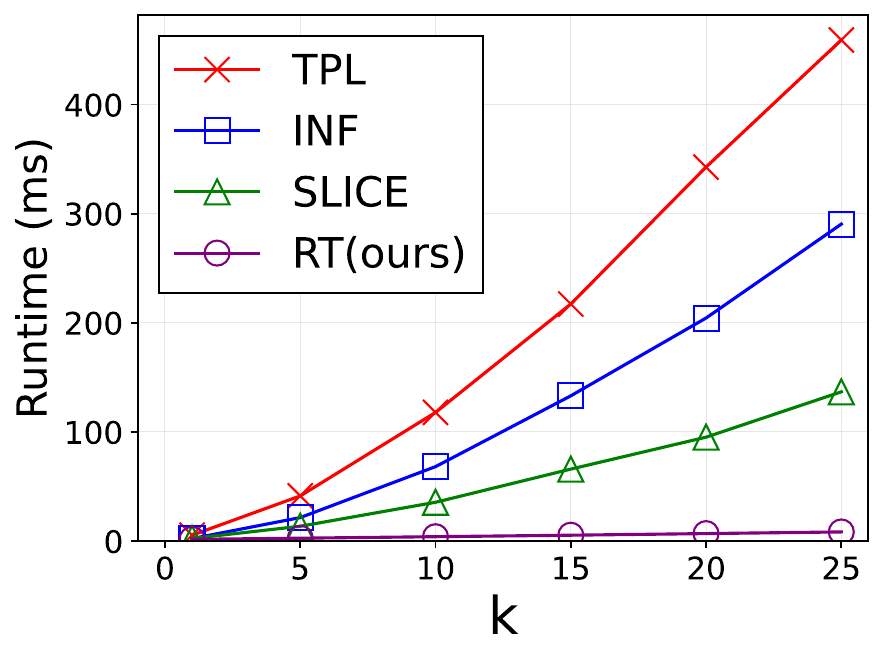}
        \caption{CAL (setting $|F|=100$)}
        \label{fig:k_cal_f100}
    \end{subfigure}
    \caption{Impact of varying $k$ on runtime in sparse facility setting.}
    \label{fig:k-sf}
\end{minipage}
\hfill
\begin{minipage}[t]{0.48\textwidth}
    \centering
    \begin{subfigure}[b]{0.48\textwidth}
        \centering
        \includegraphics[width=\textwidth]{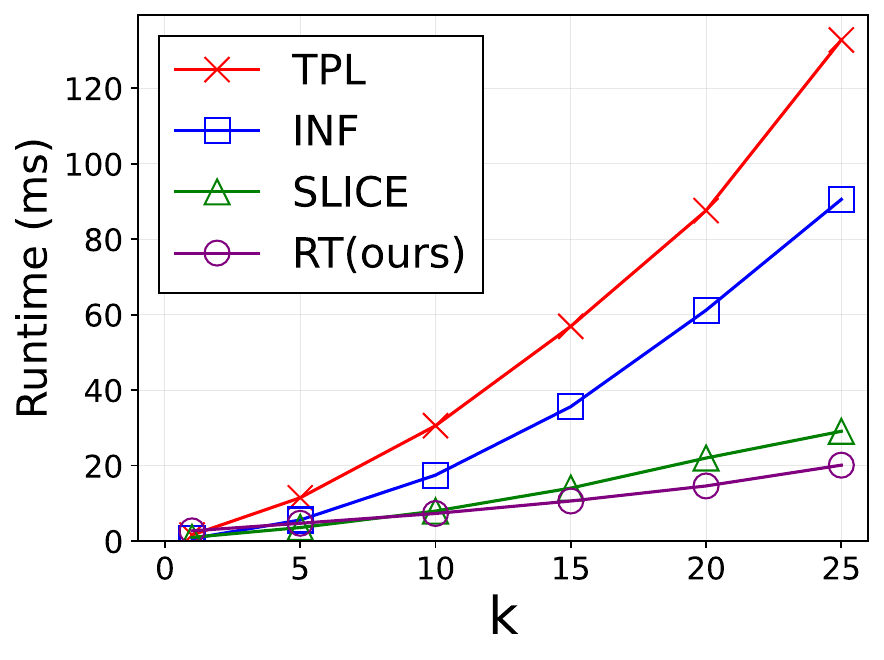}
        \caption{E (setting $|F|=1000$)}
        \label{fig:k_E_f1000}
    \end{subfigure}
    \hfill
    \begin{subfigure}[b]{0.48\textwidth}
        \centering
        \includegraphics[width=\textwidth]{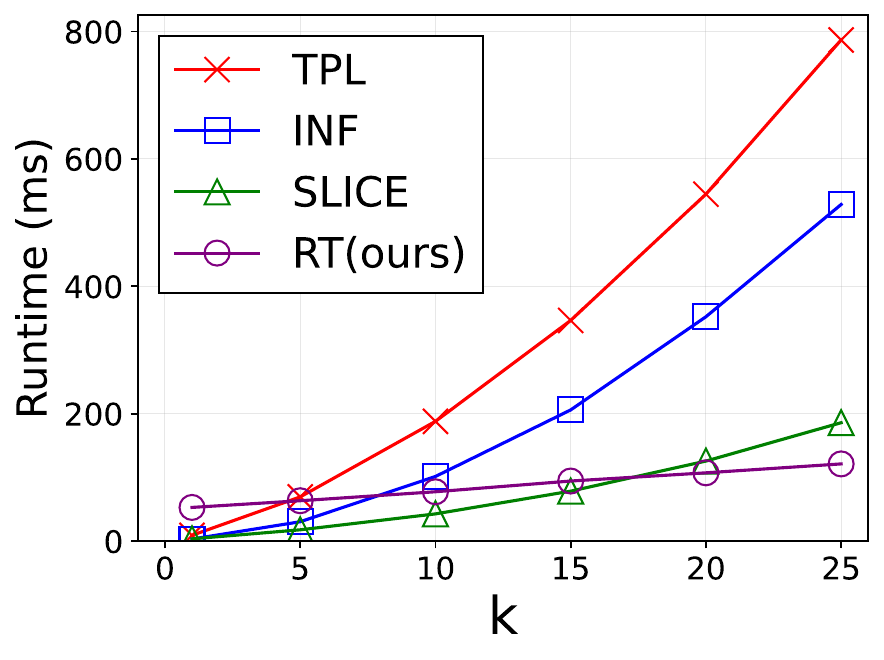}
        \caption{USA (setting $|F|=1000$)}
        \label{fig:k_USA_f1000}
    \end{subfigure}
    \caption{Impact of varying $k$ settings on runtime using the default facility setting.}
    \label{fig:k-nf}
\end{minipage}

\end{figure}
\begin{figure}[t] 
  \centering \includegraphics[width=1\linewidth]{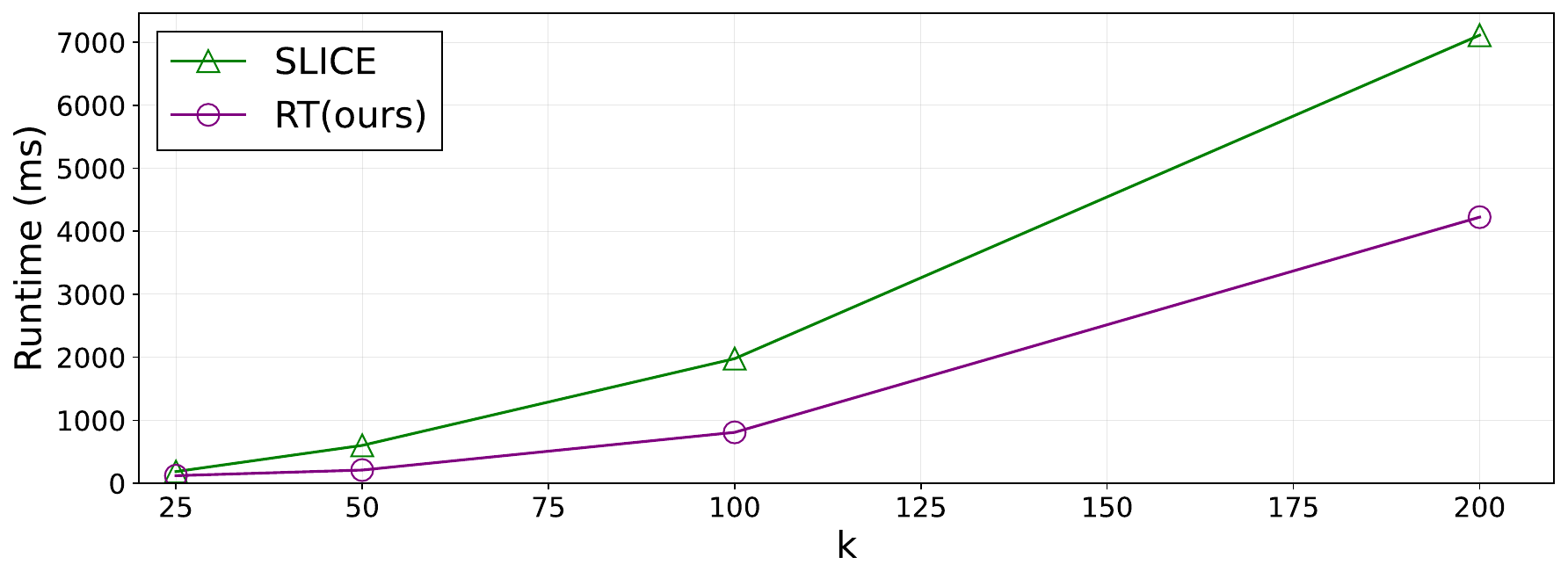} 
  \caption{Impact of large $k$ on runtime in default facility setting ($|F|=1000$) of USA dataset.} 
  \Description{large k.} 
  \label{fig:largek} 
\end{figure}

\subsection{Impact of Varying Data Sizes}

\begin{figure}[t]
    \centering
    \begin{subfigure}[b]{1\columnwidth}
        \centering
        \includegraphics[width=\textwidth]{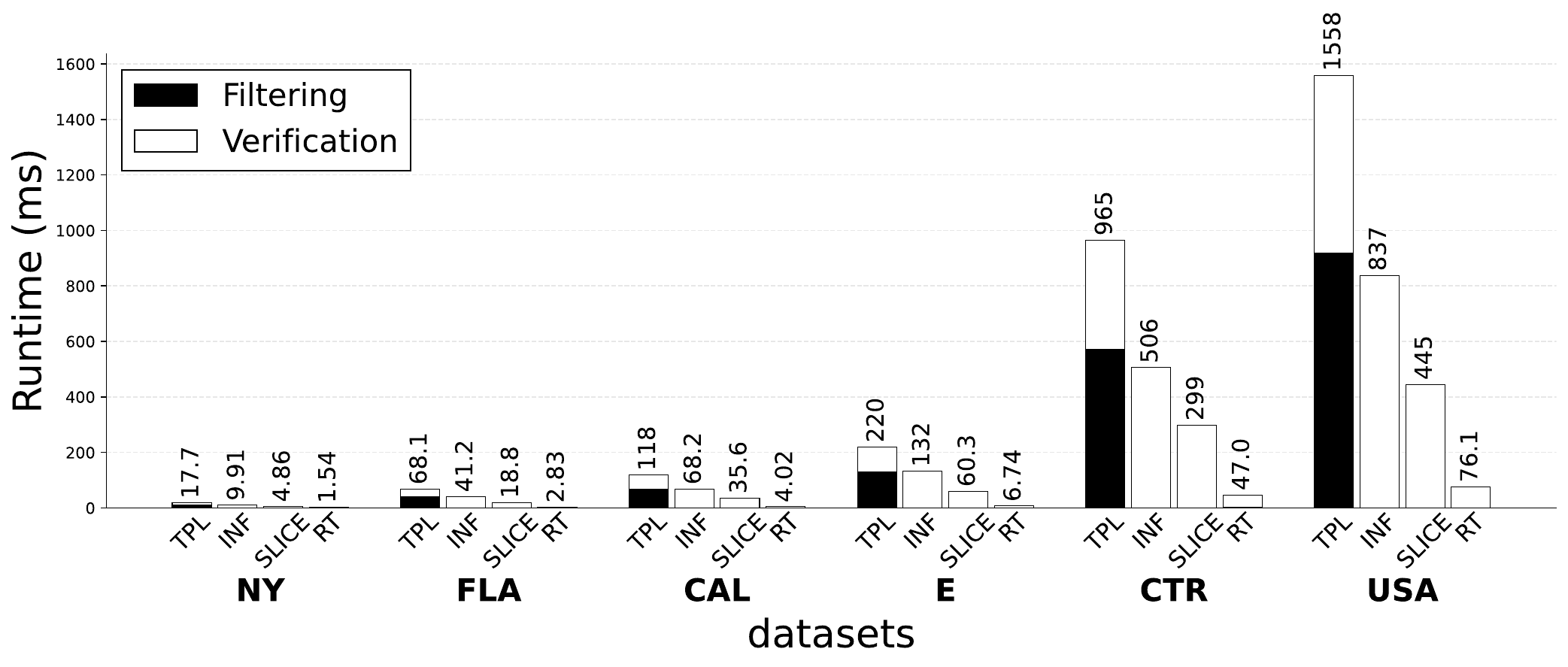}
        \caption{Sparse facility setting.}
        \label{fig:datasize_f100}
    \end{subfigure}
    \hfill
    \begin{subfigure}[b]{1\columnwidth}
        \centering
        \includegraphics[width=\textwidth]{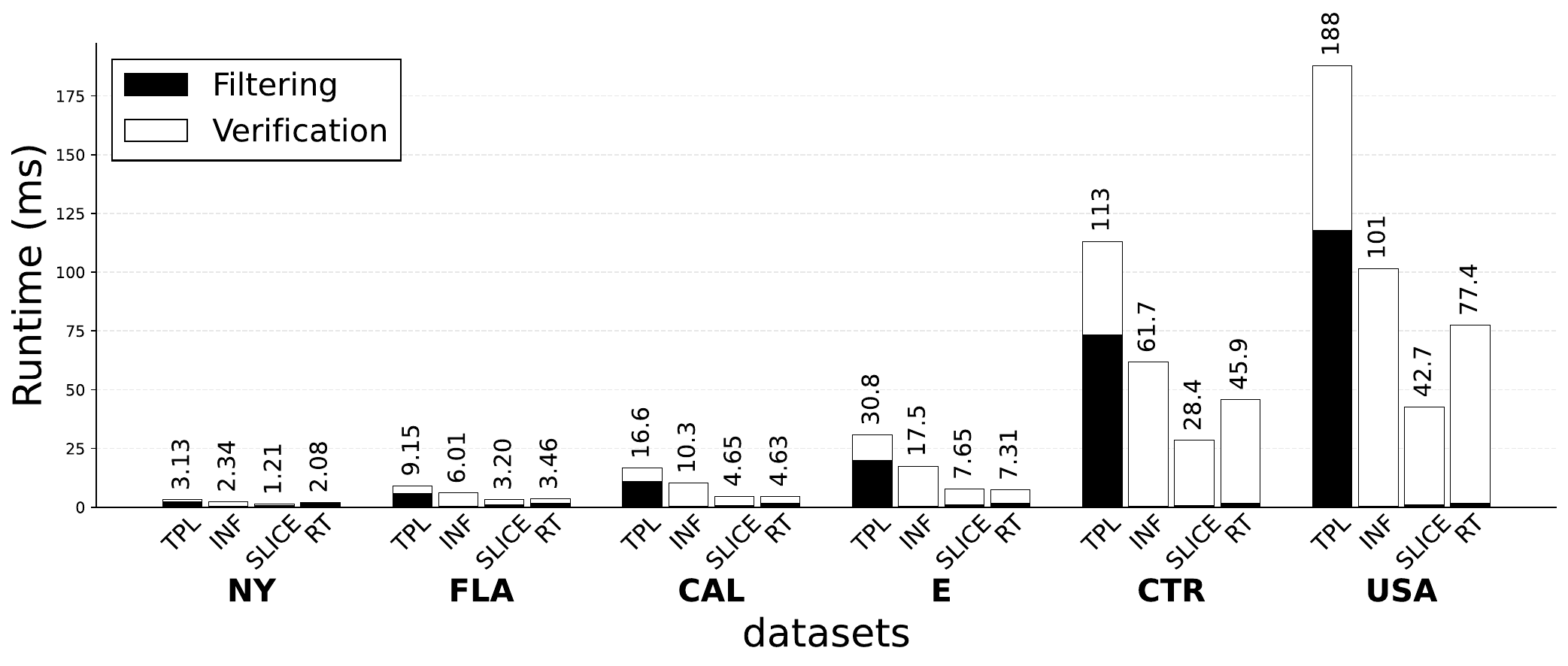}
        \caption{Default facility setting.}
        \label{fig:datasize_f1000}
    \end{subfigure}
    
    \caption{Impact of data size on runtime in sparse ($|F|=100$) and default ($|F|=1000$) facility setting.}
    \label{fig:datasize}
\end{figure}

We study the impact of data size on each algorithm using all 6 datasets, which span a wide range of scales, with the largest dataset (USA) being about 100 times larger than the smallest (NY), under both the sparse facility and default facility settings.
In a sparse facility setting, as shown in \autoref{fig:datasize_f100}, our \method{} outperforms all baselines across all datasets. 
Its runtime increases only slowly as data size grows, demonstrating the effectiveness of our algorithm with massively parallel ray--primitive intersection processing on GPU RT cores. In contrast, the performance of baseline algorithms degrades substantially once the dataset exceeds approximately 10 million points (datasets E and USA), reflecting their limited scalability under large input sizes.
\autoref{fig:datasize_f1000} shows that under the default facility setting, where spatial pruning remains highly effective, \method{} does not outperform SLICE. This is primarily due to the additional overhead introduced by data transfer between main memory and GPU memory.

\subsection{Impact of Varying Facility Cardinality}

\begin{figure}[t]
\centering

\begin{minipage}[t]{0.48\textwidth}
    \centering
    \begin{subfigure}[b]{0.48\columnwidth}
        \centering
        \includegraphics[width=\textwidth]{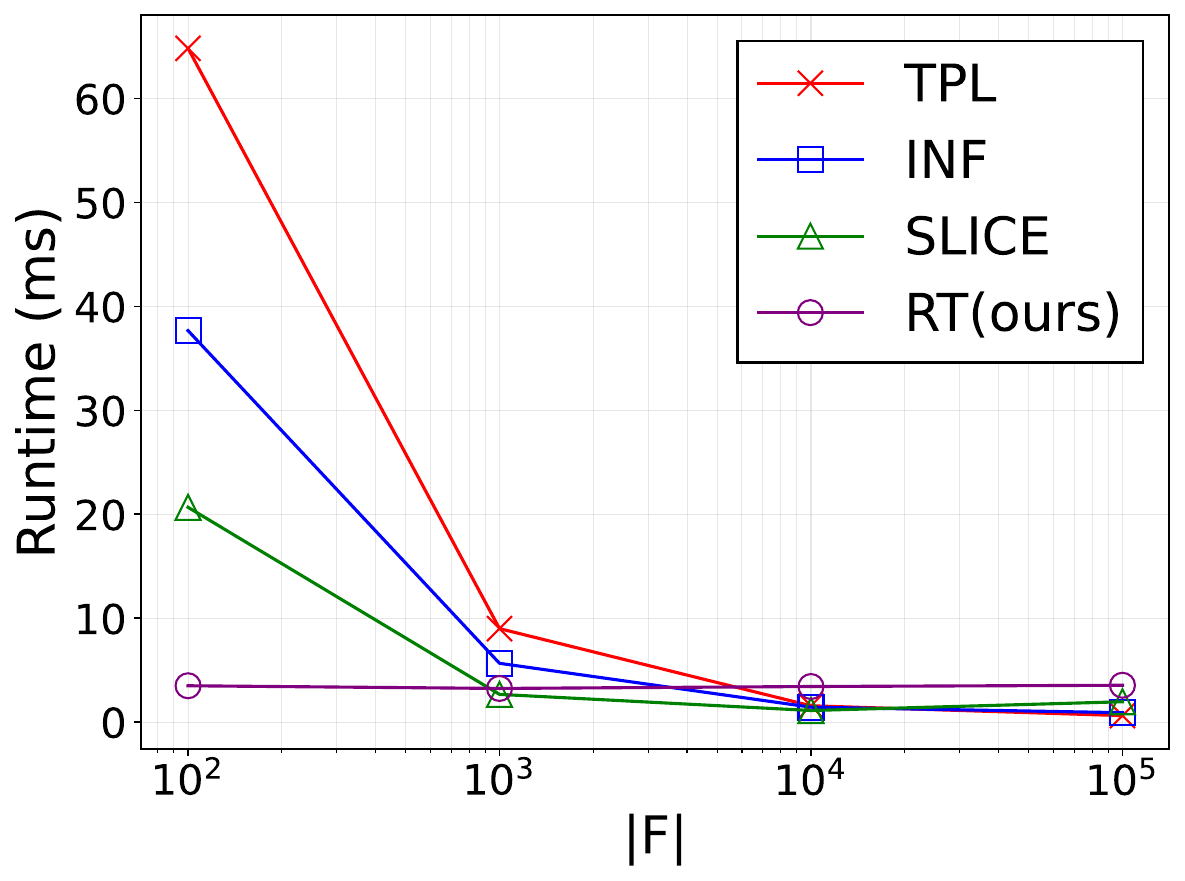}
        \caption{CAL ($|U|=10^6$)}
        \label{fig:f_cal_line}
    \end{subfigure}
    \hfill
    \begin{subfigure}[b]{0.48\columnwidth}
        \centering
        \includegraphics[width=\textwidth]{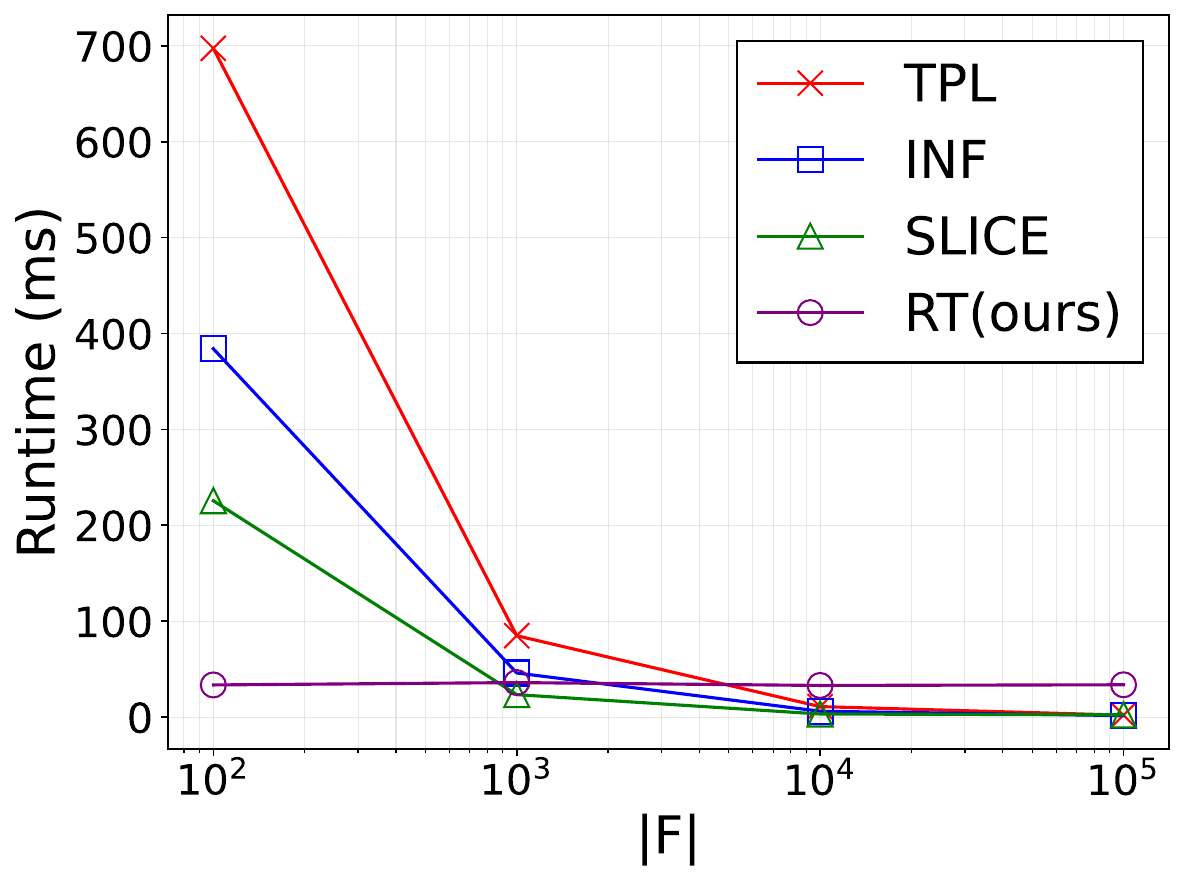}
        \caption{USA ($|U|=10^7$)}
        \label{fig:f_usa_line}
    \end{subfigure}
    
    \caption{Total runtime under varying cardinality for datasets CAL ($|U|=10^6$) and USA ($|U|=10^7$).}
    \label{fig:ficility}
\end{minipage}
\hfill
\begin{minipage}[t]{0.48\textwidth}
    \centering
    \begin{subfigure}[b]{1\columnwidth}
        \centering
        \includegraphics[width=\textwidth]{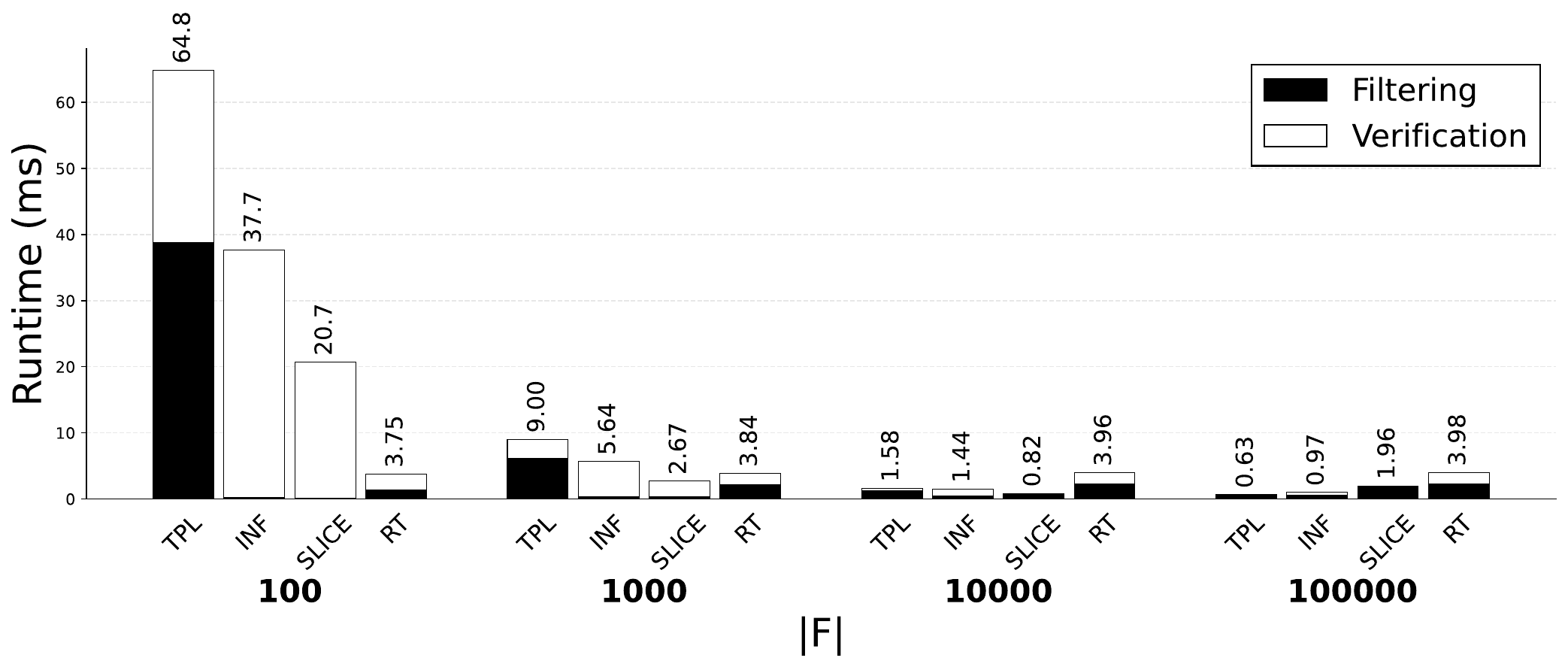}
        \caption{CAL ($|U|=10^6$)}
        \label{fig:f_cal}
    \end{subfigure}
    \hfill
    \begin{subfigure}[b]{1\columnwidth}
        \centering
        \includegraphics[width=\textwidth]{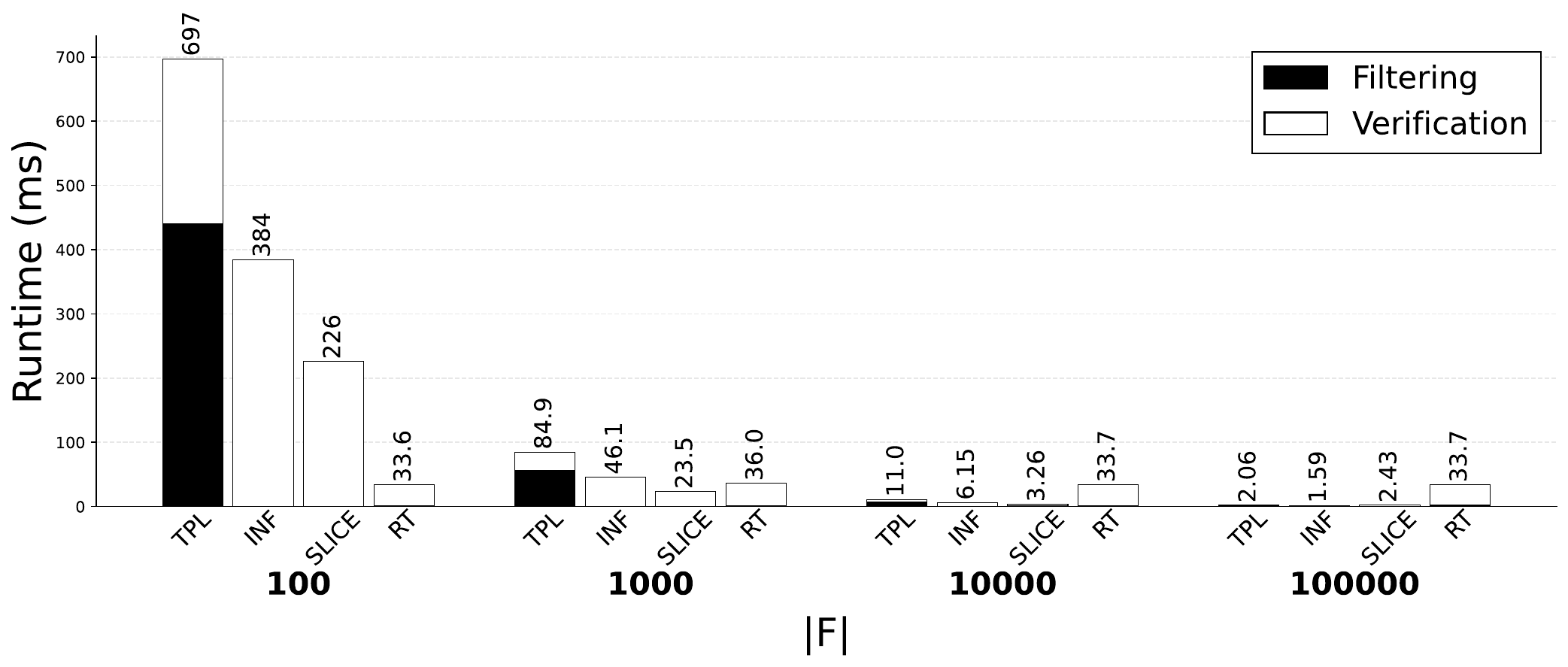}
        \caption{USA ($|U|=10^7$)}
        \label{fig:f_usa}
    \end{subfigure}
    
    \caption{Breakdown of filtering and verification time under varying facility cardinality for datasets CAL ($|U|=10^6$) and USA ($|U|=10^7$).}
    \label{fig:ficility_bar}
\end{minipage}

\end{figure}

In this section, we fix the number of users to 1\,million for CAL and 10\,million for USA, and vary the facility cardinality \(|F|\) to study its impact on performance.
From \autoref{fig:ficility}, we observe that \method{} maintains nearly constant performance across all datasets, regardless of the number of facilities. 
In contrast, the baseline algorithms improve significantly as \(|F|\) grows, benefiting from effective spatial pruning. 
From \autoref{fig:ficility_bar}, we observe that in both cases, the verification phase of the baseline algorithms becomes significantly faster as the facility cardinality increases, while \method{} maintains a stable verification cost.
The flat performance curve of \method{} occurs because all computation that can benefit from GPU parallelism is already highly optimized, while the remaining overhead involved by the GPU RT cores pipeline, is inherently difficult to parallelize and therefore does not improve with larger facility sets. 
As a result, \method{} remains largely unaffected by increases in \(|F|\), whereas pruning-based methods gain efficiency from having more facilities.
For \textit{monochromatic \RkNN{} queries}, where the query point and candidate points originate from the same set, spatial relationships tend to be more structured, providing a more favorable environment for spatial pruning as shown here.
Under these conditions, \method{} does not surpass the SLICE algorithm, which benefits greatly from strong pruning effectiveness.

\subsection{Impact of Varying User Cardinality}

\begin{figure}[t]
\centering

\begin{minipage}[t]{0.48\textwidth}
    \centering
    \begin{subfigure}[b]{0.48\columnwidth}
        \centering
        \includegraphics[width=\textwidth]{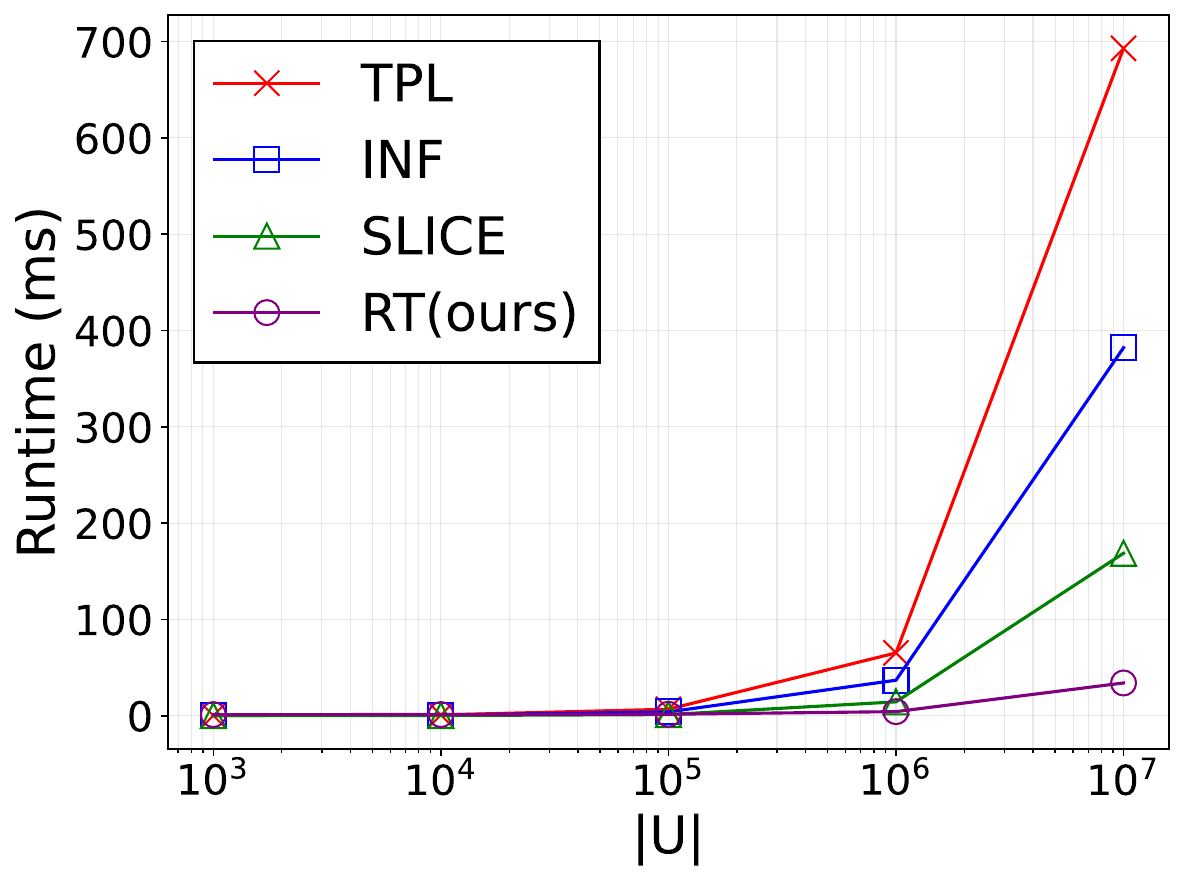}
        \caption{Sparse facility setting.}
        \label{fig:u_f100_line}
    \end{subfigure}
    \hfill
    \begin{subfigure}[b]{0.48\columnwidth}
        \centering
        \includegraphics[width=\textwidth]{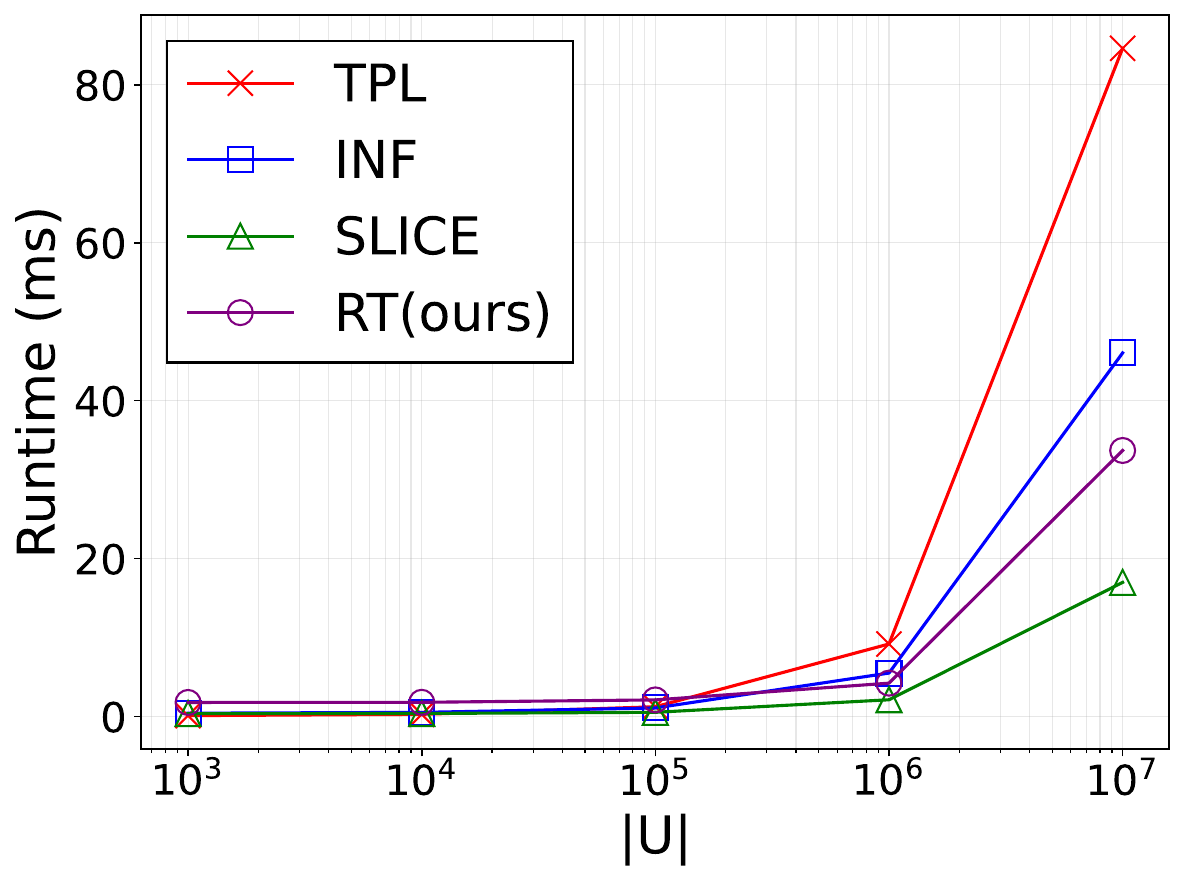}
        \caption{Default facility setting.}
        \label{fig:u_f1000_line}
    \end{subfigure}
    
    \caption{Total runtime under varying facility cardinality for sparse and default facility setting of dataset USA.}
    \label{fig:user}
\end{minipage}
\hfill
\begin{minipage}[t]{0.48\textwidth}
    \centering
    \begin{subfigure}[b]{1\columnwidth}
        \centering
        \includegraphics[width=\textwidth]{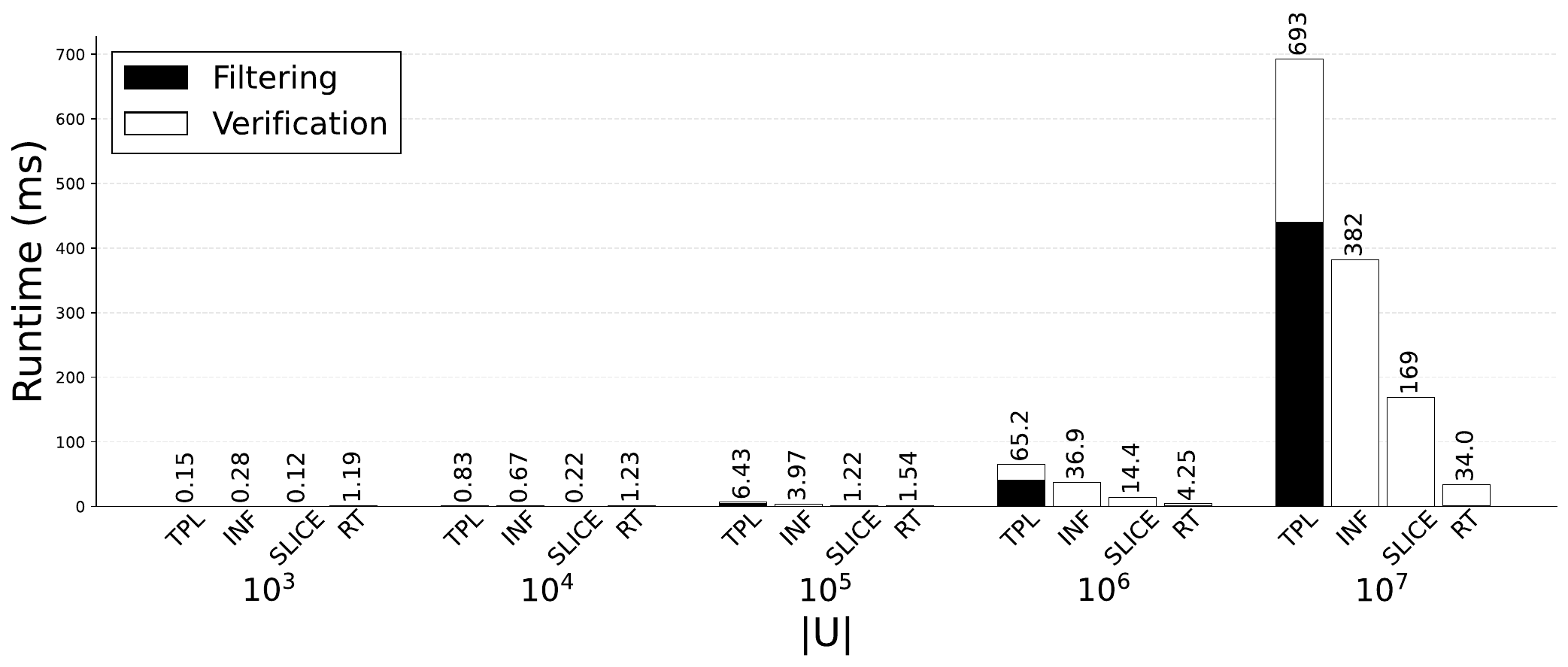}
        \caption{Sparse facility setting.}
        \label{fig:u_usa_f100}
    \end{subfigure}
    \hfill
    \begin{subfigure}[b]{1\columnwidth}
        \centering
        \includegraphics[width=\textwidth]{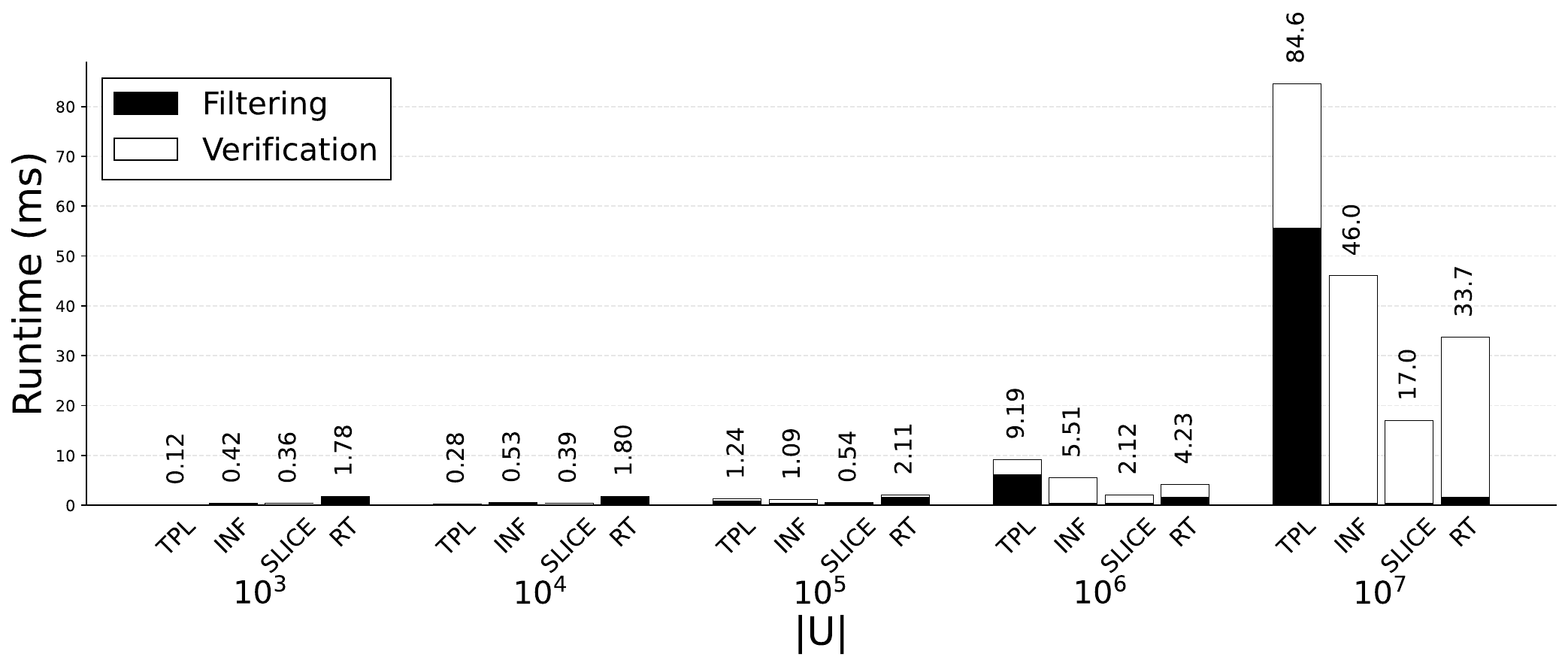}
        \caption{Default facility setting.}
        \label{fig:u_usa_f1000}
    \end{subfigure}
    
    \caption{Breakdown of filtering and verification time under varying user cardinality for the USA dataset in sparse and default facility settings.}
    \label{fig:user_bar}
\end{minipage}

\end{figure}

In this section, we fix the number of facilities using both the sparse and default facility settings for the USA dataset, and vary the user cardinality \(|U|\) to examine its impact on performance.
All algorithms perform well when the number of users is small.
However, performance degrades for every method as the user population becomes large. 
In the sparse facility setting, as shown in \autoref{fig:u_f100_line}, \method{} maintains strong performance even with very large user populations of up to \(10^7\) points, outperforming the baseline algorithms by a substantial margin. 
In the default facility setting, as illustrated in \autoref{fig:u_f1000_line}, \method{} still does not surpass SLICE.
As shown in \autoref{fig:user_bar}, in the sparse setting, \method{} achieves the best performance due to its highly efficient verification phase, which benefits from massive GPU parallelization. However, this advantage does not carry over to the default facility setting, where SLICE becomes faster but \method{} experiences performance degradation, especially when $|U|$ reaches $10^{7}$ due to the additional overhead introduced by GPU data transfer. 
Nonetheless, although \method{} does not outperform SLICE in every scenario, the performance gap in the default facility setting is relatively small compared to the substantial advantages \method{} achieves in the sparse facility setting.

\begin{figure}[t]
    \centering
    \begin{subfigure}[b]{0.48\columnwidth}
        \centering
        \includegraphics[width=\textwidth]{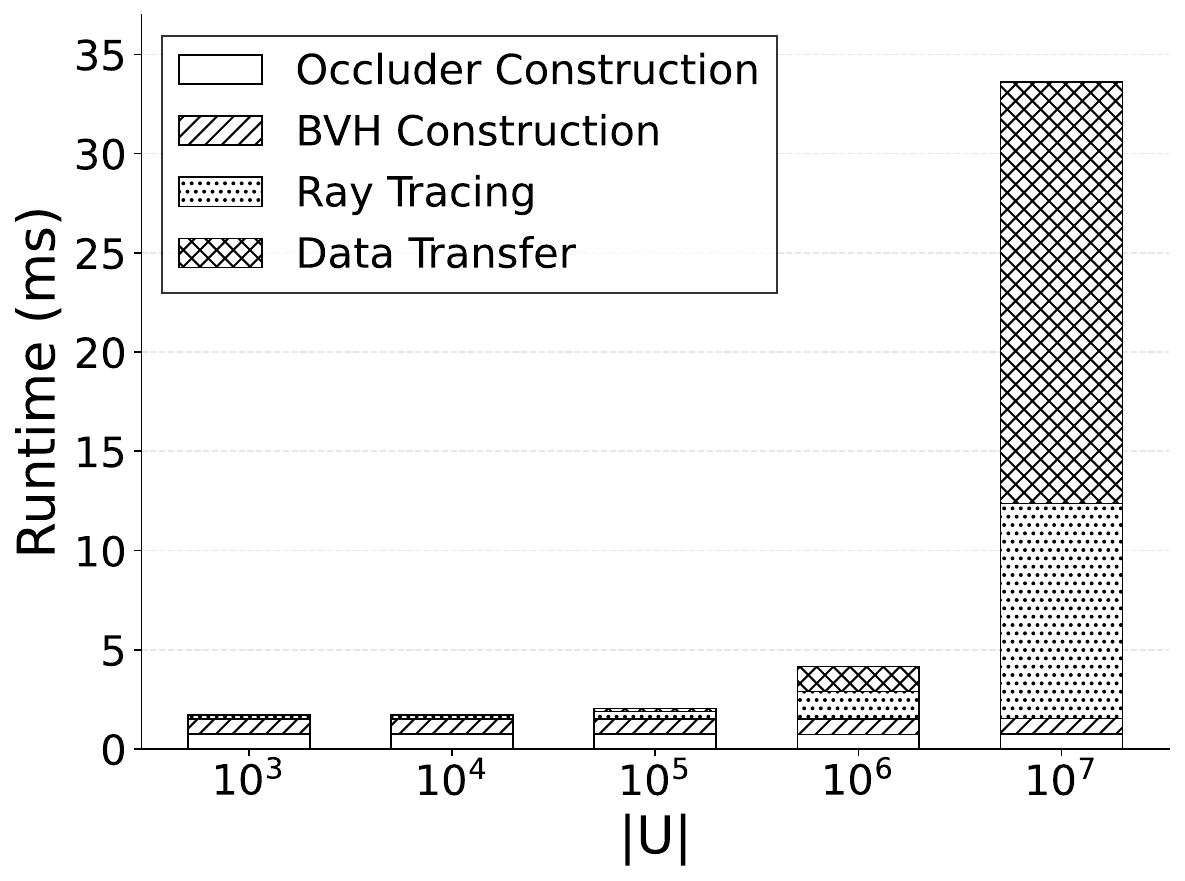}
        \caption{USA ($|F|=1000, k=10$)}
        \label{fig:u-breakdown}
    \end{subfigure}
    \hfill
    \begin{subfigure}[b]{0.48\columnwidth}
        \centering
        \includegraphics[width=\textwidth]{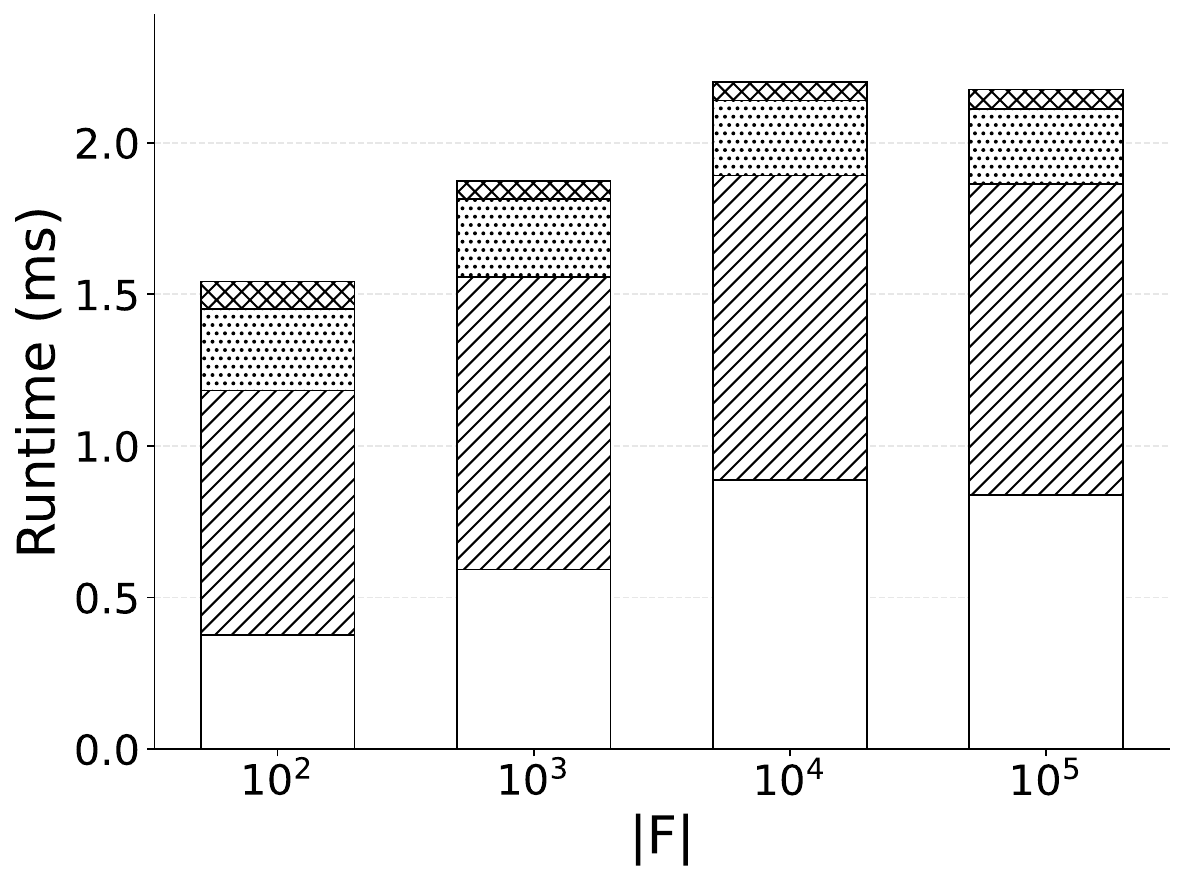}
        \caption{CAL ($|U|=10^5, k=10$)}
        \label{fig:f-breakdown}
    \end{subfigure}
    
    \caption{Runtime breakdown analysis under configurations with fixed $|U|$ and fixed $|F|$.}
    \label{fig:breakdown}
\end{figure}

\subsection{Runtime Breakdown Analysis}
In this section, we provide a breakdown analysis of \method{} to examine its advantages and limitations. 
According to the OptiX workflow, the \emph{Scene Construction} stage can be divided into two components: 
occluder construction and BVH construction. 
Although the size of the Shader Binding Table (SBT) also depends on the scene configuration, we do not include the SBT
construction time in our analysis, because \method{} only requires hit or miss information and does not rely on per-primitive attributes, making SBT construction trivial. 
In practice, SBT setup takes approximately \(0.062\) ms even for the largest scene where \(|F| = 10^7\), which is negligible compared to the other costs.
Then, \emph{Ray Casting} stage is divided into ray tracing and data transfer, which copies the ray tracing result back to the main memory from the GPU.

We fix \(|F| = 10^3\) and vary \(|U|\) from \(10^3\) to \(10^7\) to evaluate how user cardinality affects the performance of \method{} on the USA dataset. 
Since the facility set is fixed, the constructed scene remains identical across all experiments. 
As shown in \autoref{fig:u-breakdown}, RT cores sustain high ray tracing efficiency until the user population reaches approximately one million. 
However, when \(|U|\) increases to ten million, ray tracing time and data transfer time begin to dominate. 
This is expected because \(10^7\) rays exceed the number that the GPU can process concurrently, causing some rays to wait while others are executed.
It is important to note that data transfer constitutes pure overhead, influenced solely by user cardinality. When \(|U| = 10^7\), data transfer accounts for more than half of the total runtime. 
This overhead becomes particularly significant in dense facility settings, where baseline algorithms achieve efficient \RkNN{} computation with small \(k\). 
This explains why \method{} cannot outperform SLICE on very large datasets under default facility densities.
We then fix \(|U| = 10^5\) and vary \(|F|\) from \(10^2\) to \(10^5\) to evaluate how facility cardinality affects the performance of \method{} on the CAL dataset. The user cardinality is restricted to \(10^5\) to ensure that the \emph{Ray Casting} stage does not dominate the overall runtime. 
As \(|F|\) increases, the occluder construction time initially grows and reaches its peak around \(|F| = 10^4\). 
Beyond this point, the construction time stabilizes or even slightly decreases because InfZone-style pruning removes most unnecessary occluders. 
This also explains why BVH construction time and ray tracing time remain largely unaffected by changes in
\(|F|\).

\subsection{Impact of Different Occluder Counts}
In this section, we investigate how the number of occluders affects the performance of \method{}.
As introduced earlier, we apply InfZone-style pruning to avoid constructing unnecessary occluders. 
However, in practice, we observe that when the facility set is small, constructing all occluders without pruning can actually be faster due to simpler control flow and better data locality. 
Motivated by this observation, we conduct supplementary experiments comparing three occluder construction strategies:
{
\begin{itemize}[leftmargin=4.5mm]
    \item \textit{InfZone-style pruning}:  
    The original pruning strategy introduced by InfZone, which aggressively removes as many unnecessary occluders as possible, with $O(m^2)$ complexity.
    \item \textit{Conservative pruning}:  
    InfZone-style pruning is applied only for the first few occluders (e.g., the first 20). Beyond that, a lightweight conservative test based on Equation~(\ref{eq:cheapinf}) is used. Although not optimal, this approach still prunes most unnecessary occluders while substantially reducing pruning overhead.
    \item \textit{Non-pruning}:  
    As the name suggests, this strategy constructs all occluders without any pruning.
\end{itemize}
}

\begin{figure}[t] 
  \centering \includegraphics[width=1\linewidth]{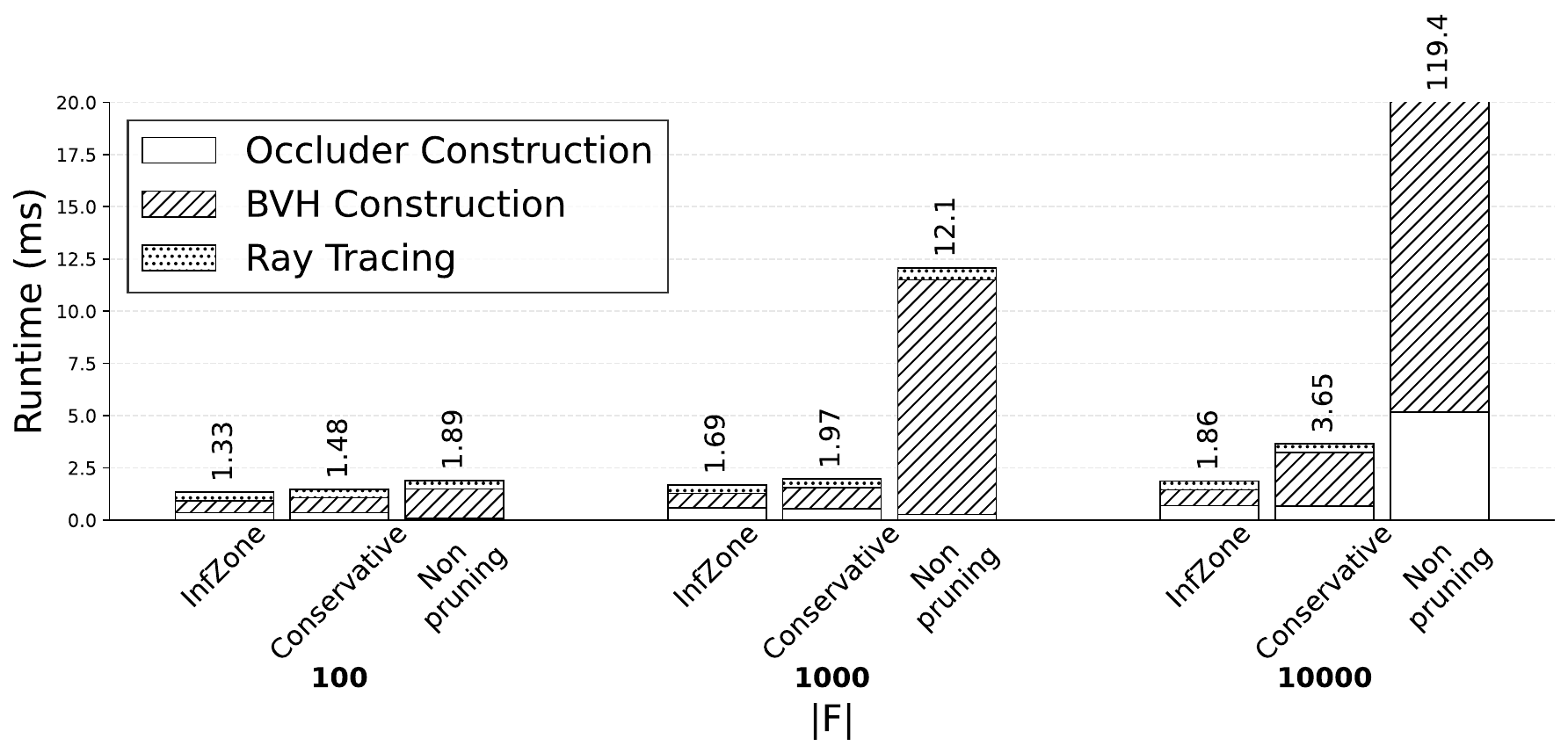} 
  \caption{Impact of occluder counts on runtime under the default New York facility configuration.} 
  \Description{large k.} 
  \label{fig:occluders} 
\end{figure}

\begin{table}[t]
\centering
\caption{Average occluder counts of different occluder construction strategies in different facility cardinality.}
\begin{tabular}{lccc}
\toprule
 & $|F|=10^2$ & $|F|=10^3$ & $|F|=10^4$ \\
\midrule
InfZone-style & 37.27 & 46.34 & 48.44 \\
Conservative & 49.54 & 83.04 & 205.34 \\
Non-pruning & 99 & 999 & 9999 \\
\bottomrule
\end{tabular}
\label{tab:occluder_count}
\end{table}

The average occluder count produced by each pruning strategy under different facility cardinalities on the NY dataset is reported in \autoref{tab:occluder_count}, and the corresponding performance results are shown in \autoref{fig:occluders}.
Since our goal is to examine performance variations attributable specifically to the number of occluders, \autoref{fig:occluders} includes only the components directly affected: \textit{Occluder Construction}, \textit{BVH Construction}, and \textit{Ray Tracing}.

We observe that \textit{Ray Tracing} is not strongly affected by changes in the occluder count, except for the non-pruning strategy when $|F|=9999$, which strongly complicates the scene for a ray to traverse (did not show in figure for clarity of other settings). This is because the NY dataset has a relatively small user cardinality, allowing rays to be processed with high parallel efficiency. 
The \textit{No-pruning} strategy constructs occluders very quickly. However, the resulting large number of occluders
significantly slows down subsequent stages, ultimately making it the slowest among all strategies even in a sparse facility setting.
For \textit{Conservative pruning}, although it removes approximately 98\% of unnecessary occluders when \(|F| = 10^4\), the remaining occluders still constitute roughly four times the number produced by \textit{InfZone-style pruning}. 
While \textit{Conservative pruning} is slightly faster than \textit{InfZone-style pruning} in the \textit{Occluder Construction} stage, this advantage is offset during BVH construction, resulting in inferior overall performance.

Consequently, \textit{InfZone-style pruning} emerges as the most effective strategy for \textit{Scene Construction}, providing the shortest total runtime and lower GPU memory usage.

\subsection{Performance Evaluation w/o RT Cores}
Since no publicly available GPU-based \RkNN{} algorithms exist, we implement a GPU baseline, termed \textit{InfZone-GPU}, which directly offloads InfZone’s verification phase to the GPU without using RT cores. 
We choose InfZone because it is the only baseline that produces no false positives, thereby eliminating the need for additional user verification and allowing most computation to remain on the GPU.
We also include the original InfZone algorithm in our comparison to demonstrate the computational advantage provided by GPU execution.

\begin{figure}[t] 
  \centering \includegraphics[width=1\linewidth]{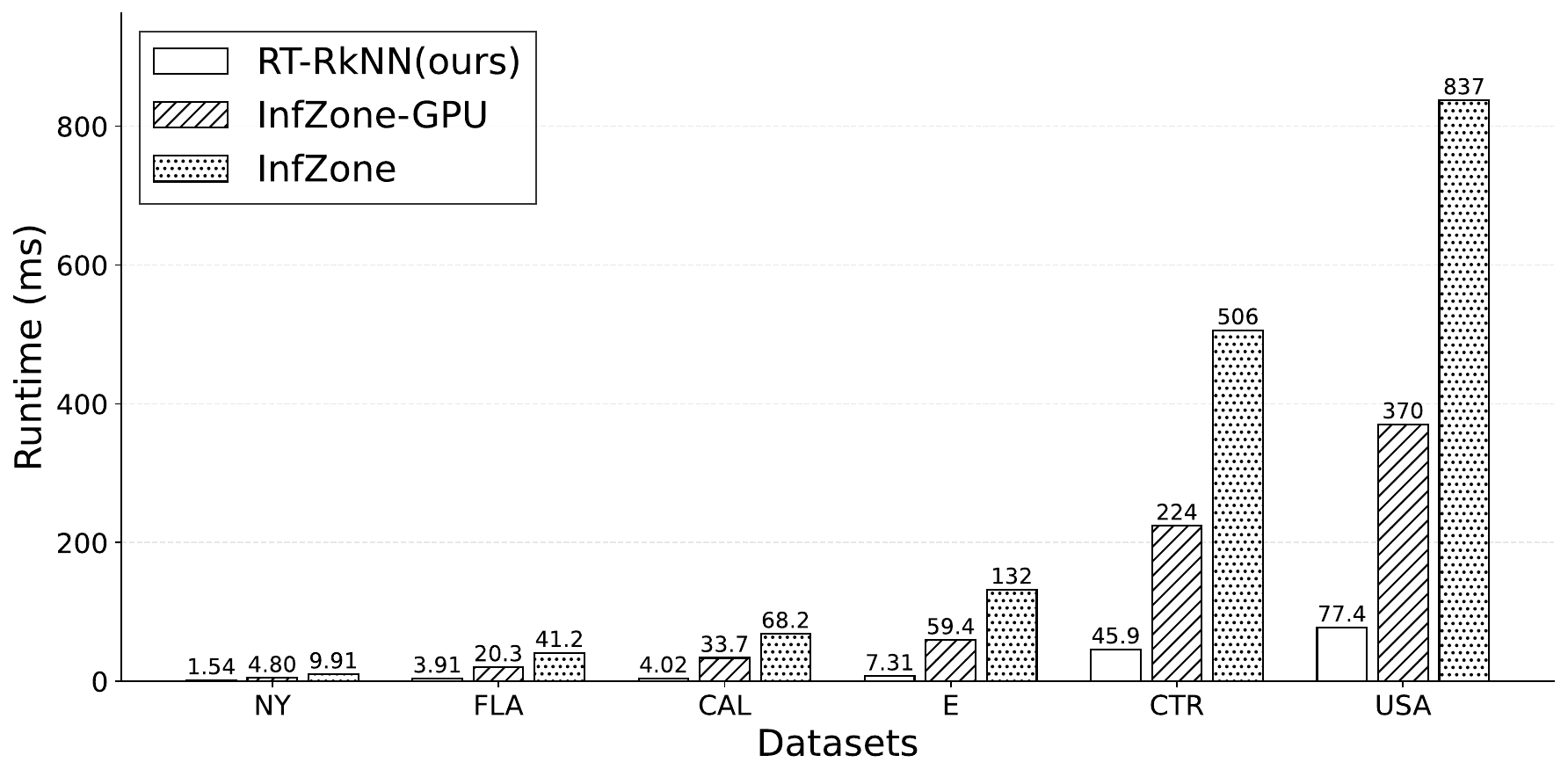} 
  \caption{Performance comparison between \method{}, GPU-based InfZone without RT cores, and CPU-based InfZone in the sparse facility setting of the NY dataset.} 
  \Description{large k.} 
  \label{fig:gpu} 
\end{figure}

On one hand, from \autoref{fig:gpu}, we observe that InfZone-GPU outperforms the original CPU-based InfZone across all datasets, demonstrating the GPU's substantial computing power even when the verification phase contains significant control divergence and irregular memory accesses that are poorly aligned with the SIMT execution model. 

On the other hand, \method{} consistently outperforms InfZone-GPU across all datasets. As data size increases, the performance gap widens: InfZone-GPU degrades rapidly due to the growing control divergence across users during verification, whereas \method{} maintains stable scaling behavior. This is because \method{} leverages dedicated hardware intersection testing in RT cores, which eliminates the need for branch-heavy verification logic and drastically reduces control divergence. These results demonstrate that our ray casting formulation, which enables algorithm fit for RT cores hardware acceleration, is essential for achieving high performance in GPU-based \RkNN{} computation.

\section{Conclusions and Future Work}\label{sec:conclusion}
In this paper, we presented \method{}, which formulates \RkNN{} as graphics ray casting by modeling users as rays and facilities as occluders, establishing a formal equivalence between spatial pruning and ray--primitive intersection. Building on this, we design the first algorithm directly exploiting GPU RT cores for hardware-accelerated intersection testing and BVH traversal. Experiments on real-world datasets with up to more than 23 million points show that \method{} significantly outperforms state-of-the-art algorithms, especially where traditional pruning becomes ineffective: sparse facility distributions, large user populations, and large $k$. \method{} also eliminates user indexing, cutting preprocessing overhead while enabling massive parallelism across all user rays.

For future work, we plan to explore four directions: (1) adapting the approach for dynamic and continuous \RkNN{} queries where facilities or users change over time; (2) investigating batched query processing to amortize scene construction costs across multiple query facilities; (3) exploring hybrid strategies that dynamically select between \method{} and traditional pruning based on data characteristics; and (4) implementing and evaluate this formulation on other RT hardware.

\begin{acks}
This work was supported in part by JSPS KAKENHI Grant Number JP24K20782.
\end{acks}


\balance
\bibliographystyle{ACM-Reference-Format}
\bibliography{sample}

\end{document}